\theoremstyle{plain}
\newtheorem{theorem}{Theorem}[section]
\newtheorem{proposition}[theorem]{Proposition}
\theoremstyle{definition}
\theoremstyle{remark}
\newtheorem{remark}{Remark}
\begin{document}

\title{Frequency-weighted $\mathcal{H}_2$-optimal model order reduction via oblique projection}

\author{
\name{Umair~Zulfiqar\textsuperscript{a}\thanks{CONTACT Umair~Zulfiqar. Email: umair.zulfiqar@research.uwa.edu.au}, Victor Sreeram\textsuperscript{a}, Mian~Ilyas~Ahmad\textsuperscript{b}, and Xin~Du\textsuperscript{c,d,e}}
\affil{\textsuperscript{a}School of Electrical, Electronics and Computer Engineering, The University of Western Australia (UWA), Perth, Australia; \textsuperscript{b}Research Centre for Modelling and Simulation, National University of Sciences and Technology (NUST), Islamabad, Pakistan; \textsuperscript{c}School of Mechatronic Engineering and Automation, and Shanghai Key Laboratory of Power Station Automation Technology, Shanghai University, Shanghai, China; \textsuperscript{d}Key Laboratory of Knowledge Automation for Industrial Processes, Ministry of Education, Beijing, China; \textsuperscript{e}Key Laboratory of Modern Power System Simulation and Control \& Renewable Energy Technology, Ministry of Education (Northeast Electric Power University), Jilin, China}
}

\maketitle

\begin{abstract}
In projection-based model order reduction, a reduced-order approximation of the original full-order system is obtained by projecting it onto a reduced subspace that contains its dominant characteristics. The problem of frequency-weighted $\mathcal{H}_2$-optimal model order reduction is to construct a local optimum in terms of the $\mathcal{H}_2$-norm of the weighted error transfer function. In this paper, a projection-based model order reduction algorithm is proposed that constructs a reduced-order model, which nearly satisfies the first-order optimality conditions for the frequency-weighted $\mathcal{H}_2$-optimal model order reduction problem. It is shown that as the order of the reduced model is increased, the deviation in the satisfaction of the optimality conditions reduces further. Numerical methods are also discussed that improve the computational efficiency of the proposed algorithm. Four numerical examples are presented to demonstrate the efficacy of the proposed algorithm.
\end{abstract}

\begin{keywords}
$\mathcal{H}_2$-optimal; frequency-weighted; model order reduction; nearly optimal; projection; suboptimal
\end{keywords}

\section{Introduction}
The complexity of the modern-day dynamic systems has been growing rapidly with each passing day. The direct simulation of the high-order mathematical models that describe large-scale dynamic systems requires a huge amount of computational resources, which are limited due to high economic cost of the memory resources. To address this issue, model order reduction (MOR) algorithms are used to obtain reduced-order models (ROMs) that are computationally cheaper to simulate, and they closely mimic the original high-order models. The reduced models can then be used as surrogates in the design and analysis with tolerable approximation error \citep{antoulas2005approximation,benner2005dimension,benner2017model}.

Projection-based MOR is a large family of algorithms wherein the original high-order model is projected onto a reduced subspace that contains its dominant characteristics. The sense of dominance determines the specific type of MOR procedure. Most of the projection-based MOR methods require solutions of some Lyapunov or Sylvester equations to construct the required ROM \citep{antoulas2005approximation}. During the last two decades, several computationally efficient low-rank methods for the solution of Lyapunov and Sylvester equations have been proposed, cf. \citep{ahmad2010krylov,gugercin2003modified,li2002low,penzl1999cyclic}. By using these methods for solving large-scale linear matrix equations, a ROM of the original large-scale model can be constructed within the admissible time for most of the projection-based MOR algorithms.

Balanced truncation (BT) \citep{moore1981principal} is one of the most important projection-based MOR methods, which is famous for its stability preservation, high fidelity, and \textit{apriori} error bound expression \citep{enns1984model}. In some situations like reduced-order controller design, it is required that the MOR procedure has a low frequency-weighted approximation error. This necessitates the inclusion of frequency-weights in the MOR algorithm. In \citep{enns1984model}, BT is generalized to incorporate frequency weights in the approximation criterion, which leads to frequency-weighted BT (FWBT). Several modifications and extensions to FWBT are reported in the literature to ensure additional properties like stability \citep{wang1999new} and passivity \citep{zulfiqar2017passivity}; see \citep{ghafoor2008survey,obinata2012model} for a detailed survey.

Another important class of projection-based MOR techniques is the Krylov subspace-based methods wherein the full-order system is projected onto a low-dimensional subspace spanned by the columns of a matrix constructed so that the projected reduced system achieves moment matching, i.e., the ROM matches some coefficients of the series expansion of the original transfer function at some selected frequency points \citep{beattie2014model}. Among these methods is the famous iterative rational Krylov algorithm (IRKA) \citep{gugercin2008h_2,van2008h2}, which constructs a local optimum for the $\mathcal{H}_2$-optimal MOR problem, i.e., the best among all the ROMs with the same modal configuration and size in minimizing the $\mathcal{H}_2$-norm of the error transfer function. Unlike the BT method, IRKA \citep{gugercin2008h_2,van2008h2} does not require the solutions of large-scale Lyapunov equations. Thus it is computationally efficient and can handle large-scale systems. Some other projection-based algorithms for the $\mathcal{H}_2$-optimal MOR problem include but are not limited to \citep{ahmad20100,ibrir2018projection,yan1999approximate}. IRKA is heuristically generalized to the frequency-weighted scenario in \citep{anic2013interpolatory} and \citep{zulfiqar2018weighted}. The algorithms proposed in \citep{anic2013interpolatory} and \citep{zulfiqar2018weighted} ensure less $\mathcal{H}_2$-norm of the weighted error transfer function; however, they do not seek to construct local optimum for the frequency-weighted $\mathcal{H}_2$-optimal MOR problem.

In \citep{halevi1990frequency}, the first-order optimality conditions for the single-sided case of frequency-weighted $\mathcal{H}_2$-optimal MOR are derived, and an algorithm based on Lyapunov and Riccati equations is proposed to satisfy these conditions. This algorithm is numerically tractable only for small-scale systems. The optimality conditions derived in \citep{halevi1990frequency} are shown equivalent to the tangential interpolation conditions in \citep{breiten2015near}, and a Krylov subspace-based iterative algorithm is proposed, which nearly satisfies these conditions. In \citep{zulfiqar2019frequency}, an iteration-free Krylov subspace-based algorithm is presented, which exactly satisfies a subset of the optimality conditions while guaranteeing the stability of the ROM at the same time.

The first-order optimality conditions for the double-sided case of the frequency-weighted $\mathcal{H}_2$-optimal MOR are derived in \citep{diab2000optimal,petersson2013nonlinear,yan1997convergent}. The algorithms presented to generate the local optimum in \citep{diab2000optimal,huang2001new,li1999h,petersson2013nonlinear,spanos1990optimal,yan1997convergent} are not feasible for large-scale systems due to high computational cost associated with nonlinear optimization. In \citep{zulfiqar2019frequency}, the frequency-weighted iterative tangential interpolation algorithm (FWITIA) is proposed that ensures less frequency-weighted $\mathcal{H}_2$-norm of the error transfer function in the double-side case. However, its connection with the optimality conditions derived in \citep{petersson2013nonlinear} is not investigated, which has motivated the work in this paper.

In this paper, we consider the double-sided case of the frequency-weighted $\mathcal{H}_2$-optimal MOR problem within the projection framework. The main motivation for seeking a projection-based solution is to avoid nonlinear optimization and benefit from the efficient Sylvester equation solver, particularly formulated for the projection-based $\mathcal{H}_2$-optimal MOR algorithms, cf. \citep{bennersparse}. We show that the exact satisfaction of the optimality conditions is inherently not possible within the projection framework. However, the optimality conditions can be nearly satisfied, and the deviation in the satisfaction of the optimality condition decays as the order of ROM grows. The conditions for exact satisfaction of the optimality conditions are also discussed. In addition, a projection-based iterative algorithm is proposed that solves the Sylvester equations in each iteration to construct the required ROM. Upon convergence, the ROM nearly satisfies the first-order optimality conditions for the double-sided case of the frequency-weighted $\mathcal{H}_2$-optimal MOR problem. Moreover, it is shown that FWITIA also seeks to satisfy the optimality conditions derived in \citep{petersson2013nonlinear}. The efficacy of the proposed algorithm is highlighted by considering one illustrative and three benchmark numerical examples.

The remainder of this paper is organized as follows. The problem under consideration is introduced in Section \ref{sec1}. In addition, FWITIA is briefly reviewed. The main contribution of the paper is covered in Sections \ref{sec2} and \ref{sec3}. The theoretical results proposed in Sections \ref{sec2} and \ref{sec3} are numerically verified in Section \ref{sec4}. The paper is concluded in Section \ref{sec5}.
\section{Preliminaries}\label{sec1}
In this section, the frequency-weighted $\mathcal{H}_2$-optimal MOR problem is introduced, and FWITIA is briefly reviewed. The important mathematical notations used throughout the text are given in Table \ref{tab0}.
\begin{table}[!h]
\centering
\caption{Mathematical Notations}\label{tab0}
\begin{tabular}{|c|c|}
\hline
Notation & Meaning \\ \hline
$\begin{bmatrix}\cdot\end{bmatrix}^*$  & Hermitian\\
$tr(\cdot)$  & Trace\\
$Ran(\cdot)$ & Range\\
$orth(\cdot)$& Orthogonal basis\\
$\bot$  & Orthogonal\\
$\supset$& Superset\\
$\underset {i=1,\cdots,r}{span}\{\cdot\}$ & Span of the set of $r$ vectors\\
\hline
\end{tabular}
\end{table}
\subsection{Problem Setting}
Let us denote an $n^{th}$-order stable linear time-invariant system with $m$ inputs and $p$ outputs as $H(s)$, which is represented as
\begin{align}
 H(s)&=C(sI-A)^{-1}B+D,\label{e1}
 \end{align} where $A\in\mathbb{R}^{n\times n}$, $B\in\mathbb{R}^{n\times m}$, $C\in\mathbb{R}^{p\times n}$, and $D\in\mathbb{R}^{p\times m}$. In a large-scale setting, the order $n$ of (\ref{e1}) is high, the matrices $(A,B,C)$ are sparse, $m\ll n$, and $p\ll n$.

 Let us denote the $r^{th}$-order approximation of $H(s)$ as $\tilde{H}(s)$, which can be written as
 \begin{align}
 \tilde{H}(s)&=\tilde{C}(sI-\tilde{A})^{-1}\tilde{B}+D,\nonumber
 \end{align}
 where $\tilde{A}\in\mathbb{R}^{r\times r}$, $\tilde{B}\in\mathbb{R}^{r\times m}$, $\tilde{C}\in\mathbb{R}^{p\times r}$, and $r\ll n$.

 In projection-based MOR, the state-space matrices $\tilde{A}$, $\tilde{B}$, and $\tilde{C}$ are obtained as
 \begin{align}
 \tilde{A}&=\tilde{W}^TA\tilde{V},&\tilde{B}&=\tilde{W}^TB,&\tilde{C}&=C\tilde{V},\label{E2}
 \end{align}
 where $\tilde{V}\in\mathbb{R}^{n\times r}$, $\tilde{W}\in\mathbb{R}^{n\times r}$ and $\tilde{W}^T\tilde{V}=I$. The columns of $\tilde{V}$ span $r$-dimensional subspace along the kernel of $\tilde{W}^T$, and $\Pi=\tilde{V}\tilde{W}^T\in\mathbb{R}^{n\times n}$ is an oblique projection onto that subspace.

 Let us denote the error transfer function as $E(s)$, which can be written as
 \begin{align}
 E(s)&=H(s)-\tilde{H}(s)=C_e(sI-A_e)^{-1}B_e\nonumber
 \end{align} in which
 \begin{align}
 A_e&=\begin{bmatrix}A&0\\0&\tilde{A}\end{bmatrix},&B_e&=\begin{bmatrix}B\\\tilde{B}\end{bmatrix},&C_e&=\begin{bmatrix}C&-\tilde{C}\end{bmatrix}.\nonumber
 \end{align}

Let us denote the input and output weights as $W_i(s)$ and $W_o(s)$, respectively, which have the following transfer functions
 \begin{align}
  W_i(s)&=C_i(sI-A_i)^{-1}B_i+D_i,\nonumber\\
  W_o(s)&=C_o(sI-A_o)^{-1}B_o+D_o,\nonumber
  \end{align} where $A_i\in\mathbb{R}^{n_i\times n_i}$, $B_i\in\mathbb{R}^{n_i\times m}$, $C_i\in\mathbb{R}^{m\times n_i}$, $D_i\in\mathbb{R}^{m\times m}$, $A_o\in\mathbb{R}^{n_o\times n_o}$, $B_o\in\mathbb{R}^{n_o\times p}$, $C_o\in\mathbb{R}^{p\times n_o}$, and $D_o\in\mathbb{R}^{p\times p}$. Also, assume that $W_i(s)$ and $W_o(s)$ are stable.

 Further, let us denote the weighted error transfer function as $E_w(s)$, which has the following representation
 \begin{align}
 E_w(s)&=W_o(s)E(s)W_i(s)=C_w(sI-A_w)^{-1}B_w,\nonumber
 \end{align} where
 \begin{align}
 A_w&=\begin{bmatrix}A&0&BC_i&0\\0&\tilde{A}&\tilde{B}C_i&0\\0&0&A_i&0\\B_oC&-B_o\tilde{C}&0&A_o\end{bmatrix},&B_w&=\begin{bmatrix}BD_i\\\tilde{B}D_i\\B_i\\0\end{bmatrix},\nonumber\\
 C_w&=\begin{bmatrix}D_oC&-D_o\tilde{C}&0&C_o\end{bmatrix}.&&\label{e2}
 \end{align}

Let us denote the controllability and observability gramians of the realization $(A_w,B_w,C_w)$ as $P_w$ and $Q_w$, respectively, which solve the following Lyapunov equations
 \begin{align}
 A_wP_w+P_wA_w^T+B_wB_w^T&=0,\label{e3}\\
 A_w^TQ_w+Q_wA_w+C_w^TC_w&=0.\label{e4}
 \end{align} $P_w$ and $Q_w$ can be partitioned according to the structure of the realization in (\ref{e2}) by defining
 \begin{align}
 P_w&=\begin{bmatrix}P&P_{12}&P_{13}&P_{14}\\P_{12}^T&\tilde{P}&P_{23}&P_{24}\\P_{13}^T&P_{23}^T&P_i&P_{34}\\P_{14}^T&P_{24}^T&P_{34}^T&P_o\end{bmatrix},&Q_w&=\begin{bmatrix}Q&Q_{12}&Q_{13}&Q_{14}\\Q_{12}^T&\tilde{Q}&Q_{23}&Q_{24}\\Q_{13}^T&Q_{23}^T&Q_i&Q_{34}\\Q_{14}^T&Q_{24}^T&Q_{34}^T&Q_o\end{bmatrix}.\nonumber
 \end{align}
 The $\mathcal{H}_2$-norm of $E_w(s)$ is the energy of the its impulse response and is related to the controllability and observability gramians of its state-space realization as
 \begin{align}
|| E_w(s)||_{\mathcal{H}_2}^2&=tr(C_wP_wC_w^T)=tr\big(D_oCPC^TD_o^T+D_o\tilde{C}\tilde{P}\tilde{C}^TD_o^T-2D_oCP_{12}\tilde{C}^TD_o^T\nonumber\\
&\hspace*{4cm}+C_oP_oC_o^T+2D_oCP_{14}C_o^T-2D_o\tilde{C}P_{24}C_o^T\big)\nonumber\\
&=tr(B_w^TQ_wB_w)=tr\big(D_i^TB^TQBD_i+D_i^T\tilde{B}^T\tilde{Q}\tilde{B}D_i+2D_i^TB^TQ_{12}\tilde{B}D_i\nonumber\\
&\hspace*{4cm}+B_i^TQ_iB_i+2D_i^TB^TQ_{13}B_i+2D_i^T\tilde{B}^TQ_{23}B_i\big).\nonumber
 \end{align}
The local optimum $\tilde{H}(s)$ of $||E_w(s)||_{\mathcal{H}_2}^2$ satisfies the following first-order optimality conditions, cf. \citep{petersson2013nonlinear},
 \begin{align}
 \frac{\partial}{\partial \tilde{A}}||E_w(s)||_{\mathcal{H}_2}^2&=0&\Rightarrow  &&\bar{X}+X=0,\label{e5}\\
\frac{\partial}{\partial \tilde{B}}||E_w(s)||_{\mathcal{H}_2}^2&=0&\Rightarrow&&\bar{Y}D_iD_i^T+Y=0,\label{e6}\\
\frac{\partial}{\partial \tilde{C}}||E_w(s)||_{\mathcal{H}_2}^2&=0&\Rightarrow &&D_o^TD_o\bar{Z}+Z=0\label{e7}
 \end{align} where
 \begin{align}
 \bar{X}&=Q_{12}^TP_{12}+\tilde{Q}\tilde{P},&\bar{Y}&=Q_{12}^TB+\tilde{Q}\tilde{B},\nonumber\\
\bar{Z}&=CP_{12}-\tilde{C}\tilde{P},&X&=Q_{23}P_{23}^T+Q_{24}P_{24}^T,\nonumber\\
 Y&=\big(Q_{12}^TP_{13}+\tilde{Q}P_{23}+Q_{23}P_i+Q_{24}P_{34}^T\big)C_i^T+Q_{23}B_iD_i^T,\nonumber\\
 Z&=-B_o^T\big(Q_{14}^TP_{12}+Q_{24}^T\tilde{P}+Q_{34}^TP_{23}^T+Q_oP_{24}^T\big)+D_o^TC_oP_{24}^T.\nonumber
 \end{align}
The problem under consideration, i.e., the frequency-weighted $\mathcal{H}_2$-MOR problem, is the following. Given the $n^{th}$-order original system $H(s)$, the input weight $W_i(s)$, and the output weight $W_o(s)$, we need to construct an $r^{th}$-order reduced model $\tilde{H}(s)$, which ensures that the $\mathcal{H}_2$-norm of $E_w(s)$ is small, i.e.,
 \begin{align}
 \underset{\substack{\tilde{H}(s)\\\textnormal{order}=r}}{\text{min}}||E_w(s)||_{\mathcal{H}_2}.\nonumber\end{align}
\subsection{Frequency-weighted Tangential Interpolation}
Here, the goal is to construct a ROM that tends to satisfy the following conditions
\begin{align}
\frac{\partial}{\partial \tilde{B}}||W_o(s)E(s)||_{\mathcal{H}_2}^2&=2(Q_{12}^TB+\tilde{Q}\tilde{B})=0,\label{e8}\\
\frac{\partial}{\partial \tilde{C}}||E(s)W_i(s)||_{\mathcal{H}_2}^2&=2(CP_{12}-\tilde{C}\tilde{P})=0.\label{e9}
\end{align} Suppose $H(s)$ and $\tilde{H}(s)$ have simple poles. Also, let $\tilde{H}(s)$ has the following pole-residue form
\begin{align}
\tilde{H}(s)&=\sum_{i=1}^{r}\frac{\tilde{l}_i\tilde{r}_i^T}{s-\tilde{\lambda}_i}+D.\nonumber
\end{align}
Now define $F\big[H(s)\big]=C_f(sI-A_f)^{-1}B_f$ and $G\big[H(s)\big]=C_g(sI-A_g)^{-1}B_g$ wherein
\begin{align}
A_f&=\begin{bmatrix}A&BC_i\\0&A_i\end{bmatrix},&A_g&=\begin{bmatrix}A&0\\B_oC&A_o\end{bmatrix},\label{e10}\\
B_f&=\begin{bmatrix}P_{13}C_i^T+BD_iD_i^T\\P_iC_i^T+B_iD_i^T\end{bmatrix},& B_g&=\begin{bmatrix}B\\B_oD\end{bmatrix},\label{e11}\\
C_f&=\begin{bmatrix}C^T\\C_i^TD_i^T\end{bmatrix}^T,& C_g&=\begin{bmatrix}Q_{14}B_o+C^TD_o^TD_o\\ Q_oB_o+C_o^TD_o\end{bmatrix}^T.\label{e12}
\end{align}
The conditions (\ref{e8}) and (\ref{e9}) are satisfied when the following tangential interpolation conditions are satisfied, cf. \citep{zulfiqar2019frequency},
\begin{align}
F\big[H(-\tilde{\lambda}_i)\big]\tilde{r}_i&=\tilde{F}\big[\tilde{H}(-\tilde{\lambda}_i)\big]\tilde{r}_i,\label{e13}\\ \tilde{l}_i^TG\big[H(-\tilde{\lambda}_i)\big]&=\tilde{l}_i^T\tilde{G}\big[\tilde{H}(-\tilde{\lambda}_i)\big].\label{e14}
\end{align}
The poles $\tilde{\lambda}_i$ and residues $(\tilde{l}_i,\tilde{r}_i)$ of $\tilde{H}(s)$ are not known \textit{apriori}. Thus the interpolation points and tangential directions are initialized arbitrarily, and after every iteration, the interpolation points $\sigma_i$ are updated as $-\tilde{\lambda}_i$, and the tangential directions $(c_i,b_i)$ are updated as the residues $(\tilde{l}_i,\tilde{r}_i)$. The rational Krylov subspaces that seek to satisfy the tangential interpolation conditions (\ref{e13}) and (\ref{e14}) are computed as
\begin{align}
Ran\begin{bmatrix}V_a\\V_b\end{bmatrix}&=\underset {i=1,\cdots,r}{span}\{(\sigma_i I-A_f)^{-1}B_{f} b_i\},\nonumber\\
Ran\begin{bmatrix}W_a\\W_b\end{bmatrix}&=\underset {i=1,\cdots,r}{span}\{(\sigma_i I-A_g^T)^{-1}C_g^T c_i\}.\nonumber
\end{align} $\tilde{V}$ and $\tilde{W}$ are set as $Ran(\tilde{V})\supset Ran(V_a)$ and $Ran(\tilde{W})\supset Ran(W_a)$, $\tilde{V}=orth(\tilde{V})$, $\tilde{W}=orth(\tilde{W})$, $\tilde{W}=\tilde{W}(\tilde{V}^T\tilde{W})^{-1}$. The algorithm is stopped when the relative change in $\tilde{\lambda}_i$ stagnates, and a ROM $\tilde{H}(s)$ is identified that tends to satisfy the interpolation conditions (\ref{e13}) and (\ref{e14}).
\section{Frequency-weighted $\mathcal{H}_2$-optimal MOR}\label{sec2}
In this section, it is shown that the optimality conditions (\ref{e5})-(\ref{e7}) cannot be (inherently) satisfied exactly within the projection framework. However, the conditions $\bar{X}=0$, $\bar{Y}=0$, and $\bar{Z}=0$ can be achieved with the projection approach. The conditions for exactly satisfying the optimality conditions (\ref{e5})-(\ref{e7}) are also discussed. Further, it is shown that as $r$ increases, the deviation in the satisfaction of the optimality conditions (\ref{e5})-(\ref{e7}) reduces as long as $\bar{X}=0$, $\bar{Y}=0$, and $\bar{Z}=0$.
\subsection{Limitation of Projection Framework}
Let us assume at the moment for simplicity that $D_iD_i^T$, $D_o^TD_o$, $\tilde{P}$, and $\tilde{Q}$ are invertible. Then it can be noted from the optimality conditions (\ref{e6}) and (\ref{e7}) that the optimal choices of $\tilde{B}$ and $\tilde{C}$ should satisfy the following
\begin{align}
\tilde{B}&=-\tilde{Q}^{-1}Q_{12}^TB-\tilde{Q}^{-1}Y(D_iD_i^T)^{-1},\nonumber\\
\tilde{C}&=CP_{12}\tilde{P}^{-1}+(D_o^TD_o)^{-1}Z\tilde{P}^{-1}.\nonumber
\end{align}
Since $\tilde{B}$ and $\tilde{C}$ are computed as $\tilde{W}^TB$ and $C\tilde{V}$, respectively, in the projection framework, the matrices $\tilde{V}$ and $\tilde{W}$ should be selected as $\tilde{V}=P_{12}\tilde{P}^{-1}$ and $\tilde{W}=-Q_{12}\tilde{Q}^{-1}$ (note that $\tilde{P}$ and $\tilde{Q}$ are symmetric). Moreover, this selection ensures that $\bar{Y}=0$ and $\bar{Z}=0$ without assuming that $D_iD_i^T$ and $D_o^TD_o$ are invertible. Since $P_{12}$, $Q_{12}$, $\tilde{P}$, and $\tilde{Q}$ depend on the unknown $(\tilde{A},\tilde{B},\tilde{C})$, the problem is nonconvex. Nevertheless, if such a solution is found within the projection framework, $\bar{X}=0$ due to the oblique projection condition $\tilde{W}^T\tilde{V}=I$. The deviations in the satisfaction of optimality conditions (\ref{e5})-(\ref{e7}) are then quantified by $X$, $Y$, and $Z$. If $X=0$, $Y=0$, and $Z=0$, the problem can be solved within the projection framework by finding the reduction matrices that ensure $\tilde{V}=P_{12}\tilde{P}^{-1}$, $\tilde{W}=-Q_{12}\tilde{Q}^{-1}$, and $\tilde{W}^T\tilde{V}=I$. The reduction matrices $\tilde{V}$ and $\tilde{W}$ have no influence on $P_{13}$, $P_i$, $Q_{14}$, and $Q_o$. There seems no straightforward way to influence $P_{23}$, $P_{24}$, $P_{34}$, $Q_{23}$, $Q_{24}$, and $Q_{34}$ using $\tilde{V}$ and $\tilde{W}$ so that $X$, $Y$, and $Z$ become zeros. Further, it is shown in \citep{hurak2001discussion,sreeram2002properties,sreeram2012improved} that the nonzero cross-terms $P_{13}$, $P_{23}$, $Q_{14}$, and $Q_{24}$ are inherent to the frequency-weighted MOR problem, and the effect of frequency-weights vanishes when these are zeros. Therefore, at best, we can ensure $\bar{X}=0$, $\bar{Y}=0$, and $\bar{Z}=0$ within the projection framework.

FWITIA is not motivated by the optimality conditions (\ref{e5})-(\ref{e7}). Instead, it follows the system theory perspective that ensuring small $||E(s)W_i(s)||_{\mathcal{H}_2}$ and $||W_o(s)E(s)||_{\mathcal{H}_2}$ generally ensures that $||E_w(s)||_{\mathcal{H}_2}$ is also small. Note that $\mathcal{H}_2$-norm, unlike $\mathcal{H}_\infty$-norm, does not enjoy the submultiplicative property, and hence,
\begin{align}
||E_w(s)||_{\mathcal{H}_2}&\leq||E(s)W_i(s)||_{\mathcal{H}_2}+||W_o(s)E(s)||_{\mathcal{H}_2}\nonumber
\end{align} does not hold in general. However, it can be shown that $\bar{Y}=0$ and $\bar{Z}=0$ is equivalent to ensuring that the gradients of the additive components of $||E_w(s)||_{\mathcal{H}_2}^2$ with respect to $\tilde{B}$ and $\tilde{C}$, respectively, become zero. This is established in Proposition \ref{l1}.
\begin{proposition}\label{l1}
Let us split $||E_w(s)||_{\mathcal{H}_2}^2$ into its additive components as $||E_w(s)||_{\mathcal{H}_2}^2=\mathcal{J}_1+\mathcal{J}_2=\mathcal{J}_3+\mathcal{J}_4$ where
\begin{align}
\mathcal{J}_1&=tr(D_oCPC^TD_o^T+D_o\tilde{C}\tilde{P}\tilde{C}^TD_o^T-2D_oCP_{12}\tilde{C}^TD_o^T),\nonumber\\
\mathcal{J}_2&=tr(C_oP_oC_o^T+2D_oCP_{14}C_o^T-2D_o\tilde{C}P_{24}C_o^T),\nonumber\\
\mathcal{J}_3&=tr(D_i^TB^TQBD_i+D_i^T\tilde{B}^T\tilde{Q}\tilde{B}D_i+2D_i^TB^TQ_{12}\tilde{B}D_i),\nonumber\\
\mathcal{J}_4&=tr(B_i^TQ_iB_i+2D_i^TB^TQ_{13}B_i+2D_i^T\tilde{B}^TQ_{23}B_i).\nonumber
\end{align}
Then $\frac{\partial}{\partial\tilde{C}}\mathcal{J}_1=0$ and $\frac{\partial}{\partial\tilde{B}}\mathcal{J}_3=0$ when $\bar{Z}=0$ and $\bar{Y}=0$, respectively.
\end{proposition}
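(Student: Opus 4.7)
The plan is to establish Proposition \ref{l1} by a direct matrix-calculus computation of the two partial derivatives, treating the gramian blocks that enter $\mathcal{J}_1$ and $\mathcal{J}_3$ as constants with respect to $\tilde{C}$ and $\tilde{B}$, respectively. The key observation is that each gradient factors as a scalar (constant) matrix times exactly $\bar{Z}$ or $\bar{Y}$, so that the implications follow immediately from the factorization.

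For $\partial \mathcal{J}_1 / \partial \tilde{C}$, the term $tr(D_oCPC^TD_o^T)$ is independent of $\tilde{C}$ and drops out. Rewriting the second summand via cyclicity of the trace as $tr(D_o^TD_o\tilde{C}\tilde{P}\tilde{C}^T)$, the standard identity $\partial\, tr(AXBX^T)/\partial X = 2AXB$ applied with symmetric $A = D_o^TD_o$ and $B = \tilde{P}$ gives $2D_o^TD_o\tilde{C}\tilde{P}$. The third summand, written as $-2\, tr(D_o^TD_oCP_{12}\tilde{C}^T)$, contributes $-2D_o^TD_oCP_{12}$ via $\partial\, tr(MX^T)/\partial X = M$. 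Summing,
\begin{align*}
\frac{\partial\mathcal{J}_1}{\partial\tilde{C}} = -2D_o^TD_o(CP_{12}-\tilde{C}\tilde{P}) = -2D_o^TD_o\bar{Z},
\end{align*}
which vanishes whenever $\bar{Z}=0$.

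The derivation for $\partial \mathcal{J}_3 / \partial \tilde{B}$ is symmetric. The term $tr(D_i^TB^TQBD_i)$ drops out. The second summand equals $tr(\tilde{Q}\tilde{B}D_iD_i^T\tilde{B}^T)$ after cycling the trace, and has derivative $2\tilde{Q}\tilde{B}D_iD_i^T$. The third summand becomes $2\, tr(D_iD_i^TB^TQ_{12}\tilde{B})$, whose derivative is $2Q_{12}^TBD_iD_i^T$. Adding these,
\begin{align*}
\frac{\partial\mathcal{J}_3}{\partial\tilde{B}} = 2(Q_{12}^TB + \tilde{Q}\tilde{B})D_iD_i^T = 2\bar{Y}D_iD_i^T,
\end{align*}
which vanishes whenever $\bar{Y}=0$.

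The main subtlety is conceptual rather than computational: one must declare what is held fixed during the partial differentiations. For $\mathcal{J}_1$ this is benign because inspection of the appropriate block rows of (\ref{e3}) shows that $\tilde{P}$ and $P_{12}$ carry no explicit $\tilde{C}$-dependence, so the direct partial is a genuine partial derivative. For $\mathcal{J}_3$ the blocks $\tilde{Q}$ and $Q_{12}$ do depend on $\tilde{B}$ through (\ref{e4}), so the gradient is understood in the formal sense of differentiating the displayed trace expression while regarding the gramian blocks as data — precisely the convention used to motivate FWITIA through conditions (\ref{e8}) and (\ref{e9}), and therefore consistent with the role this proposition plays in connecting FWITIA to the optimality conditions.
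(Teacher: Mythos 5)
Your computation is correct and follows essentially the same route as the paper's proof, which likewise differentiates the displayed trace expressions directly and identifies the gradient from the differential via $tr\big((\partial\mathcal{J}/\partial X)^T\Delta_X\big)$; the sign discrepancy in $\partial\mathcal{J}_1/\partial\tilde{C}$ (you obtain $-2D_o^TD_o\bar{Z}$ where the paper writes $2D_o^TD_o\bar{Z}$) is immaterial since only the vanishing of the gradient is asserted. One small correction to your closing remark: by (\ref{e19})--(\ref{e21}) the blocks $\tilde{Q}$ and $Q_{12}$ depend on $\tilde{A}$ and $\tilde{C}$ but not on $\tilde{B}$, so the $\mathcal{J}_3$ case is exactly as benign as the $\mathcal{J}_1$ case and no appeal to a purely formal differentiation convention is needed.
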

\begin{proof}
Let us denote the first-order derivative of $\mathcal{J}_3$ with respect to $\tilde{B}$ as $\Delta_{\mathcal{J}_3}^{\tilde{B}}$ and the differential of $\tilde{B}$ as $\Delta_{\tilde{B}}$. Then
\begin{align}
\Delta_{\mathcal{J}_3}^{\tilde{B}}&=tr(2D_i^T\Delta_{\tilde{B}}^T\tilde{Q}\tilde{B}D_i+2D_i^TB^TQ_{12}\Delta_{\tilde{B}}D_i)\nonumber\\
&=tr\Big(\big(2D_iD_i^T\tilde{B}\tilde{Q}+2D_iD_i^TB^TQ_{12}\big)\Delta_{\tilde{B}}\Big).\nonumber
\end{align}
Since $\Delta_{\mathcal{J}_3}^{\tilde{B}}=tr\Big(\big(\frac{\partial}{\partial\tilde{B}}\mathcal{J}_3\big)^T\Delta_{\tilde{B}}\Big)$, $\frac{\partial}{\partial\tilde{B}}\mathcal{J}_3=2\bar{Y}D_iD_i^T$. Thus when $\bar{Y}=0$, $\frac{\partial}{\partial\tilde{B}}\mathcal{J}_3=0$.

Now, let us denote the first-order derivative of $\mathcal{J}_1$ with respect to $\tilde{C}$ as $\Delta_{\mathcal{J}_1}^{\tilde{C}}$ and the differential of $\tilde{C}$ as $\Delta_{\tilde{C}}$. Then
\begin{align}
\Delta_{\mathcal{J}_1}^{\tilde{C}}&=tr(2D_o\Delta_{\tilde{C}}\tilde{P}\tilde{C}^TD_o^T-2D_oCP_{12}\Delta_{\tilde{C}}^TD_o^T)\nonumber\\
&=tr\Big(\big(2\tilde{P}\tilde{C}^TD_o^TD_o-2P_{12}^TC^TD_o^TD_o\big)\Delta_{\tilde{C}}\Big).\nonumber
\end{align}
Since $\Delta_{\mathcal{J}_1}^{\tilde{C}}=tr\Big(\big(\frac{\partial}{\partial\tilde{C}}\mathcal{J}_1\big)^T\Delta_{\tilde{C}}\Big)$, $\frac{\partial}{\partial\tilde{C}}\mathcal{J}_1=2D_o^TD_o\bar{Z}$. Thus when $\bar{Z}=0$, $\frac{\partial}{\partial\tilde{C}}\mathcal{J}_1=0$. This completes the proof.
\end{proof}
  Although this was not recognized in \citep{zulfiqar2019frequency}, it is evident now that FWITIA is not completely heuristic in terms of seeking to ensure that $||E_w(s)||_{\mathcal{H}_2}^2$ is small. Note that FWITIA seeks to ensure that $\bar{Y}=0$ and $\bar{Z}=0$ by satisfying the tangential interpolation conditions (\ref{e13}) and (\ref{e14}). However, (\ref{e13}) and (\ref{e14}) require $F[\tilde{H}(s)]$ and $G[\tilde{H}(s)]$ to maintain the structure of $F[H(s)]$ and $G[H(s)]$ given in (\ref{e10})-(\ref{e12}), which is not possible in general. Therefore, FWITIA may not satisfy the interpolation conditions (\ref{e13}) and (\ref{e14}) exactly, and thus $\bar{Y}\approx0$ and $\bar{Z}\approx0$ upon convergence.

 An interesting parallel should be noted that the nonzero cross-terms $P_{13}$, $P_{23}$, $Q_{14}$, and $Q_{24}$ inhibit FWBT to inherit the stability preservation and Hankel singular values retention properties of BT \citep{ghafoor2007frequency,sreeram2012improved}. Here also, the nonzero cross-terms inhibit the projection framework to satisfy the first-order optimality conditions exactly, which is possible in case of the standard $\mathcal{H}_2$-optimal MOR problem.
\subsection{Conditions for Exact Satisfaction of the Optimality Conditions}\label{sub3.2}
As seen already that to ensure $\bar{X}=0$, $\bar{Y}=0$, and $\bar{Z}=0$, we need to find the reduction matrices $\tilde{V}=P_{12}\tilde{P}^{-1}$ and $\tilde{W}=-Q_{12}\tilde{Q}^{-1}$ that ensures the oblique projection condition $\tilde{W}^T\tilde{V}=I$. By expanding the Lyapunov equations (\ref{e3}) and (\ref{e4}) according to the structure of $(A_w,B_w,C_w)$ in (\ref{e2}), it can be noted that $P_{23}$, $\tilde{P}$, $P_{12}$, $Q_{24}$, $\tilde{Q}$, and $Q_{12}$ solve the following Sylvester equations
\begin{align}
\tilde{A}P_{23}+P_{23}A_i^T+\tilde{B}(C_iP_i+D_iB_i^T)&=0,\label{e16}\\
\tilde{A}\tilde{P}+\tilde{P}\tilde{A}^T+\tilde{B}C_iP_{23}^T+P_{23}C_i^T\tilde{B}^T+\tilde{B}D_iD_i^T\tilde{B}^T&=0,\label{e17}\\
AP_{12}+P_{12}\tilde{A}^T+BC_iP_{23}^T+P_{13}C_i^T\tilde{B}^T+BD_iD_i^T\tilde{B}^T&=0,\label{e18}\\
\tilde{A}^TQ_{24}+Q_{24}A_o-\tilde{C}^T(B_o^TQ_o+D_o^TC_o)&=0,\label{e19}\\
\tilde{A}^T\tilde{Q}+\tilde{Q}\tilde{A}-\tilde{C}^TB_o^TQ_{24}^T-Q_{24}B_o\tilde{C}+\tilde{C}^TD_o^TD_o\tilde{C}&=0,\label{e20}\\
A^TQ_{12}+Q_{12}\tilde{A}+C^TB_o^TQ_{24}^T-Q_{14}B_o\tilde{C}-C^TD_o^TD_o\tilde{C}&=0.\label{e21}
\end{align}
If the ROM is obtained using the oblique projection $\Pi=-P_{12}\tilde{P}^{-1}\tilde{Q}^{-1}Q_{12}^T$ with $\tilde{V}=P_{12}\tilde{P}^{-1}$ and $\tilde{W}=-Q_{12}\tilde{Q}^{-1}$, (\ref{E2}) and the equations (\ref{e16})-(\ref{e21}) can be viewed as two coupled system of equations, i.e.,
\begin{align}
(\tilde{A},\tilde{B},\tilde{C})&=f(P_{12},Q_{12},\tilde{P},\tilde{Q}),\nonumber\\
(P_{12},Q_{12},\tilde{P},\tilde{Q})&=g(\tilde{A},\tilde{B},\tilde{C}).\nonumber
\end{align}
Clearly, the fixed points of $(\tilde{A},\tilde{B},\tilde{C})=f\big(g(\tilde{A},\tilde{B},\tilde{C})\big)$ ensure that $\bar{X}=0$, $\bar{Y}=0$, and $\bar{Z}=0$ if the condition for oblique projection $\tilde{W}^T\tilde{V}=I$ is satisfied. We now show in the next theorem that these fixed points satisfy the optimality conditions (\ref{e5})-(\ref{e7}) if $P_{13}-\tilde{V}P_{23}=0$ and $Q_{14}+\tilde{W}Q_{24}=0$ wherein $\tilde{V}$ and $\tilde{W}$ have full column rank.
\begin{theorem}\label{t1}
  Let $\tilde{A}$ be Hurwitz and $(\tilde{A},\tilde{B},\tilde{C})$ be a fixed point of $(\tilde{A},\tilde{B},\tilde{C})=f\big(g(\tilde{A},\tilde{B},\tilde{C})\big)$. Also, let that $\tilde{P}$ and $\tilde{Q}$ are invertible at the fixed point, and the fixed point is obtained by using the oblique projection $\Pi=-P_{12}\tilde{P}^{-1}\tilde{Q}^{-1}Q_{12}^T$ with $\tilde{V}=P_{12}\tilde{P}^{-1}$ and $\tilde{W}=-Q_{12}\tilde{Q}^{-1}$. Then $(\tilde{A},\tilde{B},\tilde{C})$ satisfies the first-order optimality conditions (\ref{e5})-(\ref{e7}) provided $P_{13}-\tilde{V}P_{23}=0$ and $Q_{14}+\tilde{W}Q_{24}=0$ wherein $\tilde{V}$ and $\tilde{W}$ have full column rank.
\end{theorem}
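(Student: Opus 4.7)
The first move is to reduce (\ref{e5})--(\ref{e7}) to showing $X=0$, $Y=0$, and $Z=0$. By the construction leading up to the theorem, a fixed point obtained with $\tilde{V} = P_{12}\tilde{P}^{-1}$, $\tilde{W} = -Q_{12}\tilde{Q}^{-1}$, and $\tilde{W}^T\tilde{V}=I$ automatically makes $\bar{X}$, $\bar{Y}$, and $\bar{Z}$ vanish, so only the residual quantities $X$, $Y$, $Z$ need to be controlled. I would then rewrite the definitions of $\tilde{V}$ and $\tilde{W}$ as $P_{12}=\tilde{V}\tilde{P}$ and $Q_{12}=-\tilde{W}\tilde{Q}$ and combine them with $\tilde{B}=\tilde{W}^TB$ and $\tilde{C}=C\tilde{V}$ to produce two useful identities: $Q_{12}^TB=-\tilde{Q}\tilde{B}$ and $P_{12}^TC^T=\tilde{P}\tilde{C}^T$. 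The hypotheses $P_{13}=\tilde{V}P_{23}$ and $Q_{14}=-\tilde{W}Q_{24}$ then translate, via the same relations, into $CP_{13}=\tilde{C}P_{23}$ and $B^TQ_{14}=-\tilde{B}^TQ_{24}$.

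Next I would expand the Lyapunov equations (\ref{e3})--(\ref{e4}) block by block to extract Sylvester equations for $P_{34}$, $Q_{34}$, $P_{24}$, and $Q_{23}$. The $(3,4)$-block equations are
\begin{align}
A_iP_{34}+P_{34}A_o^T+P_{13}^TC^TB_o^T-P_{23}^T\tilde{C}^TB_o^T&=0,\nonumber\\
A_i^TQ_{34}+Q_{34}A_o+C_i^TB^TQ_{14}+C_i^T\tilde{B}^TQ_{24}&=0,\nonumber
\end{align}
and the two identities above make the non-derivative terms cancel exactly. Since $A_i$ and $A_o$ are Hurwitz, the homogeneous Sylvester operators that remain are invertible, so $P_{34}=0$ and $Q_{34}=0$.

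Injecting $P_{34}=0$ and $Q_{34}=0$ into the $(2,4)$-block equation for $P_{24}$ and the $(2,3)$-block equation for $Q_{23}$ leaves forcing terms of the shape $P_{12}^TC^TB_o^T-\tilde{P}\tilde{C}^TB_o^T$ and $Q_{12}^TBC_i+\tilde{Q}\tilde{B}C_i$, respectively, which vanish by the two projection-framework identities noted above; the Hurwitz property of $\tilde{A}$ then forces $P_{24}=0$ and $Q_{23}=0$. Substituting these four zeros into the definitions of $X$, $Y$, and $Z$ kills every summand involving $P_{24}$, $P_{34}$, $Q_{23}$, $Q_{34}$; the two remaining pairs $Q_{12}^TP_{13}+\tilde{Q}P_{23}$ (inside $Y$) and $Q_{14}^TP_{12}+Q_{24}^T\tilde{P}$ (inside $Z$) collapse to zero in a single step by $P_{13}=\tilde{V}P_{23}$, $Q_{14}=-\tilde{W}Q_{24}$, and $\tilde{W}^T\tilde{V}=I$, completing the proof.

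The hard part will not be conceptual but bookkeeping: four block Sylvester equations must be derived cleanly from (\ref{e3})--(\ref{e4}), and transposes and signs in the coupling blocks of $A_w$ (notably the $-B_o\tilde{C}$ block) have to be tracked carefully. Once those equations are in front of us, every cancellation is a one-line substitution, and the uniqueness-of-solution step is merely the observation that $\tilde{A}$, $A_i$, and $A_o$ being Hurwitz renders the relevant Sylvester operators invertible. The full-column-rank condition on $\tilde{V}$ and $\tilde{W}$ is used only implicitly, to ensure that the hypotheses $P_{13}-\tilde{V}P_{23}=0$ and $Q_{14}+\tilde{W}Q_{24}=0$ determine $P_{23}$ and $Q_{24}$ unambiguously.
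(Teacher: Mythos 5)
Your proposal is correct and follows essentially the same route as the paper: expand the Lyapunov equations into block Sylvester equations, use $\bar{Z}=0$ together with $CP_{13}=\tilde{C}P_{23}$ to kill $P_{34}$ and $P_{24}$, use $\bar{Y}=0$ together with $B^TQ_{14}=-\tilde{B}^TQ_{24}$ to kill $Q_{34}$ and $Q_{23}$, and then collapse the surviving terms of $Y$ and $Z$ via $\tilde{Q}=-Q_{12}^T\tilde{V}$, $\tilde{P}=\tilde{W}^TP_{12}$, and the stated hypotheses. The only cosmetic difference is that you re-derive $\bar{X}=\bar{Y}=\bar{Z}=0$ from the fixed-point structure inside the proof, whereas the paper imports these from the preceding discussion.
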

\begin{proof}
We need to show that when $P_{13}-\tilde{V}P_{23}=0$ and $Q_{14}+\tilde{W}Q_{24}=0$, $X=0$, $Y=0$, and $Z=0$ at the fixed points of $(\tilde{A},\tilde{B},\tilde{C})=f\big(g(\tilde{A},\tilde{B},\tilde{C})\big)$. By expanding the Lyapunov equation (\ref{e3}), one can note that $P_{34}$ and $P_{24}$ satisfy the following Sylvester equations
\begin{align}
A_iP_{34}+P_{34}A_o^T+(P_{13}^TC^T-P_{23}^T\tilde{C}^T)B_o^T&=0,\nonumber\\
\tilde{A}P_{24}+P_{24}A_o^T+\tilde{B}C_iP_{34}+(P_{12}^TC^T-\tilde{P}\tilde{C}^T)B_o^T&=0.\nonumber
\end{align} Since $\bar{Z}=0$ and $CP_{13}-\tilde{C}P_{23}=0$, we get
\begin{align}
A_iP_{34}+P_{34}A_o^T&=0&\textnormal{ and }&& \tilde{A}P_{24}+P_{24}A_o^T+\tilde{B}C_iP_{34}&=0.\nonumber
\end{align} Thus $P_{34}=0$ and $P_{24}=0$.

It can be noted by expanding the Lyapunov equation (\ref{e4}) that $Q_{34}$ and $Q_{23}$ solve the following Sylvester equations
\begin{align}
A_i^TQ_{34}+Q_{34}A_o+C_i^T(B^TQ_{14}+\tilde{B}^TQ_{24})&=0,\nonumber\\
\tilde{A}^TQ_{23}+Q_{23}A_i-\tilde{C}^TB_o^TQ_{34}^T+(\tilde{Q}\tilde{B}+Q_{12}^TB)C_i&=0.\nonumber
\end{align}
Now, since $\bar{Y}=0$ and $B^TQ_{14}+\tilde{B}^TQ_{24}=0$, we get
\begin{align}
A_i^TQ_{34}+Q_{34}A_o&=0&\textnormal{ and }&&\tilde{A}^TQ_{23}+Q_{23}A_i-\tilde{C}^TB_o^TQ_{34}^T&=0.\nonumber
\end{align} Thus $Q_{34}=0$, $Q_{23}=0$, and therefore, $X=0$.

Further, since $\tilde{W}^T\tilde{V}=I$, $\tilde{Q}=-Q_{12}^T\tilde{V}$ and  $\tilde{P}=\tilde{W}^TP_{12}$, the matrices $Y$ and $Z$ become
\begin{align}
Y&=Q_{12}^T(P_{13}-\tilde{V}P_{23})C_i^T=0,\nonumber\\
Z&= -B_o^T(Q_{14}^T+Q_{24}^T\tilde{W}^T)P_{12}=0.\nonumber
\end{align}
Thus $Y=0$ and $Z=0$. This completes the proof.
\end{proof}
\begin{remark}
When $W_i(s)=I$, $Y=0$, and the fixed point of $(\tilde{A},\tilde{B},\tilde{C})=f\big(g(\tilde{A},\tilde{B},\tilde{C})\big)$ satisfies the optimality condition (\ref{e6}) exactly if $\tilde{Q}$ is invertible at the fixed point. Similarly, when $W_o(s)=I$, $Z=0$, and the fixed point of $(\tilde{A},\tilde{B},\tilde{C})=f\big(g(\tilde{A},\tilde{B},\tilde{C})\big)$ satisfies the optimality condition (\ref{e7}) exactly if $\tilde{P}$ is invertible at the fixed point.
\end{remark}
Note that $\tilde{P}$ and $\tilde{Q}$ do not change the subspaces spanned by the columns of $\tilde{V}=P_{12}\tilde{P}^{-1}$ and $\tilde{W}=-Q_{12}\tilde{Q}^{-1}$ but only transform the basis of $P_{12}$ and $-Q_{12}$. Thus we can construct $\tilde{V}$ and $\tilde{W}$ as $\tilde{V}=P_{12}$ and $\tilde{W}=-Q_{12}$. By doing so, the invertibility of $\tilde{P}$ and $\tilde{Q}$ is no more required, which otherwise makes the problem quite restrictive because, during the process of finding the fixed points, $\tilde{P}$ and $\tilde{Q}$ may not be invertible all the time. This can cause numerical problems and limit the applicability of any possible fixed point iteration algorithm wherein the fixed points are obtained iteratively upon convergence. If the ROM is obtained by using the oblique projection $\Pi=-P_{12}Q_{12}^T$ with $\tilde{V}=P_{12}$ and $\tilde{W}=-Q_{12}$, (\ref{E2}) and the equations (\ref{e16})-(\ref{e21}) can be viewed as two coupled system of equations, i.e.,
\begin{align}
(\tilde{A},\tilde{B},\tilde{C})&=f_1(P_{12},Q_{12})&\textnormal{ and } &&(P_{12},Q_{12})&=g_1(\tilde{A},\tilde{B},\tilde{C}).\nonumber
\end{align}
In the next theorem, we show that the fixed points of $(\tilde{A},\tilde{B},\tilde{C})=f_1\big(g_1(\tilde{A},\tilde{B},\tilde{C})\big)$ satisfy the optimality conditions (\ref{e5})-(\ref{e7}) if $P_{13}-\tilde{V}P_{23}=0$ and $Q_{14}+\tilde{W}Q_{24}=0$ wherein $\tilde{V}$ and $\tilde{W}$ have full column rank.
\begin{theorem}
  Let $\tilde{A}$ be Hurwitz and $(\tilde{A},\tilde{B},\tilde{C})$ be a fixed point of $(\tilde{A},\tilde{B},\tilde{C})=f_1\big(g_1(\tilde{A},\tilde{B},\tilde{C})\big)$ obtained by using the oblique projection $\Pi=-P_{12}Q_{12}^T$ with $\tilde{V}=P_{12}$ and $\tilde{W}=-Q_{12}$. Then $(\tilde{A},\tilde{B},\tilde{C})$ satisfies the first-order optimality conditions (\ref{e5})-(\ref{e7}) provided $P_{13}-\tilde{V}P_{23}=0$ and $Q_{14}+\tilde{W}Q_{24}=0$ wherein $\tilde{V}$ and $\tilde{W}$ have full column rank.
\end{theorem}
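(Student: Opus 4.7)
The plan is to follow the proof of Theorem~\ref{t1} closely, with one new ingredient: the identities $\tilde{W}^T P_{12} = \tilde{P}$ and $-Q_{12}^T \tilde{V} = \tilde{Q}$ must be recovered without relying on the invertibility of $\tilde{P}$ and $\tilde{Q}$. In Theorem~\ref{t1} these dropped out of the explicit normalization $\tilde{V} = P_{12}\tilde{P}^{-1}$, $\tilde{W} = -Q_{12}\tilde{Q}^{-1}$; here they have to be \emph{derived} from the fixed-point property itself, using the Sylvester structure (\ref{e16})--(\ref{e21}) and uniqueness of solutions to Lyapunov equations with Hurwitz $\tilde{A}$.

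The first key step is to show $P_{12} = \tilde{V}\tilde{P}$ at the fixed point. Multiplying (\ref{e18}) on the left by $\tilde{W}^T$ and using the projection relations $\tilde{W}^T\tilde{V} = I$, $\tilde{A} = \tilde{W}^T A\tilde{V}$, $\tilde{B} = \tilde{W}^T B$, together with the consequence $\tilde{W}^T P_{13} = P_{23}$ of the hypothesis $P_{13} = \tilde{V}P_{23}$, produces a Lyapunov equation for $\tilde{W}^T P_{12}$ whose coefficient operator and right-hand side coincide exactly with those of (\ref{e17}) defining $\tilde{P}$. Since $\tilde{A}$ is Hurwitz, the solution is unique, so $\tilde{W}^T P_{12} = \tilde{P}$; writing $P_{12} = \tilde{V}R$ (possible because $\tilde{V}$ and $P_{12}$ share the same column range) and left-multiplying by $\tilde{W}^T$ forces $R = \tilde{P}$, hence $P_{12} = \tilde{V}\tilde{P}$. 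A parallel calculation starting from the transpose of (\ref{e21}) and right-multiplying by $\tilde{V}$, combined with the hypothesis $Q_{14} = -\tilde{W}Q_{24}$ (which yields $Q_{14}^T\tilde{V} = -Q_{24}^T$), produces a Lyapunov equation matching (\ref{e20}) and therefore delivers $Q_{12} = -\tilde{W}\tilde{Q}$.

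Substituting these two identities into the gradient expressions yields $\bar{X} = \bar{Y} = \bar{Z} = 0$ immediately: for instance $Q_{12}^T P_{12} = -\tilde{Q}\tilde{W}^T\tilde{V}\tilde{P} = -\tilde{Q}\tilde{P}$, whence $\bar{X} = Q_{12}^T P_{12} + \tilde{Q}\tilde{P} = 0$, and the checks for $\bar{Y}$ and $\bar{Z}$ are analogous. From here the argument proceeds in lock-step with the proof of Theorem~\ref{t1}: expanding the Lyapunov equations (\ref{e3}) and (\ref{e4}) produces Sylvester equations for $P_{34}, P_{24}, Q_{34}, Q_{23}$ whose inhomogeneous parts vanish by virtue of $\bar{Y} = \bar{Z} = 0$ and the two standing hypotheses, so these cross-terms are all zero and $X = 0$ follows. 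After inserting these zeros, the surviving parts of $Y$ and $Z$ collapse to $Q_{12}^T(P_{13} - \tilde{V}P_{23})C_i^T$ and $-B_o^T(Q_{14}^T + Q_{24}^T\tilde{W}^T)P_{12}$ respectively, both of which are zero by hypothesis.

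The main technical obstacle --- and the whole reason for reformulating the coupled system as $f_1 \circ g_1$ rather than $f \circ g$ --- is the Sylvester-matching step above, since one can no longer simply invert $\tilde{P}$ and $\tilde{Q}$ to read the identities off algebraically. Establishing them instead via uniqueness of the Lyapunov solution is exactly what allows the theorem to remain valid at fixed points where $\tilde{P}$ or $\tilde{Q}$ happens to be singular, providing the robustness that any practical fixed-point iteration built on this formulation will need.
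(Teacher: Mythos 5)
Your proposal is correct and follows essentially the same route as the paper: left-multiply (\ref{e18}) by $\tilde{W}^T$ (and (\ref{e21}) by $\tilde{V}^T$), use $\tilde{W}^TP_{13}=P_{23}$ and $\tilde{V}^TQ_{14}=-Q_{24}$ to match the result against (\ref{e17}) and (\ref{e20}), invoke uniqueness of the Lyapunov solution for Hurwitz $\tilde{A}$, and then repeat the cross-term argument of Theorem \ref{t1}. The only cosmetic difference is that the paper states the conclusion of the matching step as $\tilde{P}=I$ and $\tilde{Q}=I$ (which is what your identities $\tilde{W}^TP_{12}=\tilde{P}$ and $-Q_{12}^T\tilde{V}=\tilde{Q}$ reduce to, since $\tilde{W}^TP_{12}=\tilde{W}^T\tilde{V}=I$ here), and then reads $\bar{X}=\bar{Y}=\bar{Z}=0$ directly from that.
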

\begin{proof}
Since the fixed points of $(\tilde{A},\tilde{B},\tilde{C})=f_1\big(g_1(\tilde{A},\tilde{B},\tilde{C})\big)$ are obtained by using the oblique projection $\Pi=-P_{12}Q_{12}^T$, the following holds $CP_{12}-\tilde{C}=0$, $Q_{12}^TB+\tilde{B}=0$, and $Q_{12}^TP_{12}+I=0$ at the fixed points. We first show that $\bar{X}=0$, $\bar{Y}=0$, and $\bar{Z}=0$ at the fixed points of $(\tilde{A},\tilde{B},\tilde{C})=f_1\big(g_1(\tilde{A},\tilde{B},\tilde{C})\big)$ if $P_{13}-\tilde{V}P_{23}=0$ and $Q_{14}+\tilde{W}Q_{24}=0$ wherein $\tilde{V}$ and $\tilde{W}$ have full column rank. By multiplying (\ref{e18}) with $\tilde{W}^T$ from the left, we get
\begin{align}
\tilde{W}^TAP_{12}+\tilde{W}^TP_{12}\tilde{A}^T+\tilde{W}^TBC_iP_{23}^T+\tilde{W}^TP_{13}C_i^T\tilde{B}^T+\tilde{W}^TBD_iD_i^T\tilde{B}^T&=0.\nonumber
\end{align}
Note that $\tilde{W}^TP_{12}=\tilde{W}^T\tilde{V}=I$. Also, note that $\tilde{W}^TP_{13}=P_{23}$, since $P_{13}-\tilde{V}P_{23}=0$. Thus
\begin{align}
\tilde{A}+\tilde{A}^T+\tilde{B}C_iP_{23}^T+P_{23}C_i^T\tilde{B}^T+\tilde{B}D_iD_i^T\tilde{B}^T&=0.\nonumber
\end{align}
Due to uniqueness, $\tilde{P}=I$, and thus $\bar{Z}=0$.

By multiplying (\ref{e21}) with $\tilde{V}^T$ from the left, we get
\begin{align}
\tilde{V}^TA^TQ_{12}+\tilde{V}^TQ_{12}\tilde{A}+\tilde{V}^TC^TB_o^TQ_{24}^T-\tilde{V}^TQ_{14}B_o\tilde{C}-\tilde{V}^TC^TD_o^TD_o\tilde{C}&=0.\nonumber
\end{align}
Note that $\tilde{V}^TQ_{14}=-Q_{24}$ since $Q_{14}+\tilde{W}Q_{24}=0$. Also, note that $\tilde{V}^T\tilde{W}=-\tilde{V}^TQ_{12}=I$. Thus
\begin{align}
-\tilde{A}^T-\tilde{A}+\tilde{C}^TB_o^TQ_{24}^T+Q_{24}B_o\tilde{C}-\tilde{C}^TD_o^TD_o\tilde{C}&=0.\nonumber
\end{align}
Due to uniqueness, $\tilde{Q}=I$, and thus $\bar{Y}=0$ and $\bar{X}=0$.

It is now left to show that $X=0$, $Y=0$, and $Z=0$. From Theorem \ref{t1}, we know that when $\bar{Y}=0$, $\bar{Z}=0$, $P_{13}=\tilde{V}P_{23}$, and $Q_{14}=-\tilde{W}Q_{24}$, the matrices $P_{34}=0$, $Q_{23}=0$ and $X=0$. Further, since $\tilde{P}=I$, and $\tilde{Q}=I$, $Y$ and $Z$ become
\begin{align}
Y&=(Q_{12}^TP_{13}+P_{23})C_i^T&\textnormal{ and }&&Z&=B_o^T(Q_{14}^TP_{12}+Q_{24}^T).\nonumber
\end{align} Since $P_{13}-\tilde{V}P_{23}=0$ and $Q_{14}+\tilde{W}Q_{24}=0$, the matrices $Y=0$ and $Z=0$. This completes the proof.
\end{proof}
\begin{remark}
For the oblique projection $\Pi=-P_{12}\tilde{P}^{-1}\tilde{Q}^{-1}Q_{12}^T$, the fixed points of $(\tilde{A},\tilde{B},\tilde{C})=f\big(g(\tilde{A},\tilde{B},\tilde{C})\big)$ ensures $\bar{X}=0$, $\bar{Y}=0$, and $\bar{Z}=0$ regardless of whether the conditions $P_{13}=\tilde{V}P_{23}$ and $Q_{14}=-\tilde{W}Q_{24}$ hold or not. However, for the oblique projection $\Pi=-P_{12}Q_{12}^T$, the fixed points of $(\tilde{A},\tilde{B},\tilde{C})=f_1\big(g_1(\tilde{A},\tilde{B},\tilde{C})\big)$ ensures $\bar{X}=0$, $\bar{Y}=0$, and $\bar{Z}=0$ only if $P_{13}=\tilde{V}P_{23}$ and $Q_{14}=-\tilde{W}Q_{24}$ also hold. Therefore, although the reduction matrices $\tilde{V}$ and $\tilde{W}$ span the same subspace in both cases, the change of basis in the latter case incurs deviations in $\bar{X}=0$, $\bar{Y}=0$, and $\bar{Z}=0$ if the conditions $P_{13}=\tilde{V}P_{23}$ and $Q_{14}=-\tilde{W}Q_{24}$ are violated.
\end{remark}

By expanding the Lyapunov equations (\ref{e3}) and (\ref{e4}), one can note that $P_{13}$ and $Q_{14}$ solve the following Sylvester equations
\begin{align}
AP_{13}+P_{13}A_i^T+B(C_iP_i+D_iB_i^T)&=0,\label{e22}\\
A^TQ_{14}+Q_{14}A_o+C^T(B_o^TQ_o+D_o^TC_o)&=0.\label{e23}
\end{align}
Thus $\tilde{V}P_{23}$ and $-\tilde{W}Q_{24}$ can be seen as Petrov-Galerkin approximations of $P_{13}$ and $Q_{14}$ in the following sense
\begin{align}
Ran\big(A\tilde{V}P_{23}+\tilde{V}P_{23}A_i^T+B(C_iP_i+D_iB_i^T)\big)&\bot Ran\big(\tilde{W}\big),\nonumber\\
Ran\big(A^T\tilde{W}Q_{24}+\tilde{W}Q_{24}A_o-C^T(B_o^TQ_o+D_o^TC_o)\big)&\bot Ran\big(\tilde{V}\big).\nonumber
\end{align}

In general, $P_{13}\neq\tilde{V}P_{23}$ and $Q_{14}\neq-\tilde{W}Q_{24}$. To achieve a nearly optimum ROM, we need to find fixed points of $(\tilde{A},\tilde{B},\tilde{C})=f_1\big(g_1(\tilde{A},\tilde{B},\tilde{C})\big)$ by using the oblique projection $\Pi=\tilde{V}\tilde{W}^T$, which also provides good Petrov-Galerkin approximations of $P_{13}$ and $Q_{14}$ as $\tilde{V}P_{23}\approx P_{13}$ and $-\tilde{W}Q_{24}\approx Q_{14}$.
\begin{remark}
When $W_i(s)$ and $W_o(s)$ are co-inner and inner functions, respectively, the matrices $P_{13}=0$, $P_{23}=0$, $Q_{14}=0$, and $Q_{24}=0$ \citep{sahlan2007properties,sreeram2002properties,sreeram2012improved}. Thus $P_{13}-\tilde{V}P_{23}=0$ and $Q_{14}+\tilde{W}Q_{24}=0$, and the fixed points of $(\tilde{A},\tilde{B},\tilde{C})=f_1\big(g_1(\tilde{A},\tilde{B},\tilde{C})\big)$ satisfy the optimality conditions (\ref{e5})-(\ref{e7}) exactly.
\end{remark}
\subsection{Deviation in the Optimality Conditions}
We now show that as the order of the ROM increases, the fixed points of $(\tilde{A},\tilde{B},\tilde{C})=f_1\big(g_1(\tilde{A},\tilde{B},\tilde{C})\big)$ implicitly ensures that $P_{13}\approx\tilde{V}P_{23}$ and $Q_{14}\approx-\tilde{W}Q_{24}$. Thus the deviation in the satisfaction of the optimality conditions (\ref{e5})-(\ref{e7}) decays as the order of ROM increases.

To observe this, note that
\begin{align}
||E(s)W_i(s)||_{\mathcal{H}_2}^2&=trace(CPC^T-2CP_{12}\tilde{C}^T+\tilde{C}\tilde{P}\tilde{C}^T).\nonumber
\end{align} When $\bar{Z}=0$, $||E(s)W_i(s)||_{\mathcal{H}_2}^2$ becomes
\begin{align}
||E(s)W_i(s)||_{\mathcal{H}_2}^2&=trace(CPC^T-\tilde{C}\tilde{P}\tilde{C}^T)=trace\big(C\big(P-\tilde{V}\tilde{P}\tilde{V}\big)C^T\big).\nonumber
\end{align} Thus as the order of ROM increases and $||E(s)W_i(s)||_{\mathcal{H}_2}^2$ decreases, $\hat{P}=\tilde{V}\tilde{P}\tilde{V}$ approaches $P$. Also, since
\begin{align}
||W(s)E(s)||_{\mathcal{H}_2}^2&=trace(B^TQB+2B^TQ_{12}\tilde{B}+\tilde{B}^T\tilde{Q}\tilde{B}),\nonumber
\end{align} $\bar{Y}=0$, $||W_o(s)E(s)||_{\mathcal{H}_2}^2$ becomes
\begin{align}
||W_o(s)E(s)||_{\mathcal{H}_2}^2&=trace(B^TQB-\tilde{B}^T\tilde{Q}\tilde{B})=trace\big(B^T\big(Q-\tilde{W}\tilde{Q}\tilde{W}^T\big)B\big).\nonumber
\end{align} As the order of ROM increases and $||W_o(s)E(s)||_{\mathcal{H}_2}^2$ decreases, $\hat{Q}=\tilde{W}\tilde{Q}\tilde{W}^T$ approaches $Q$.

$P$ and $Q$ solve the following Lyapunov equations
\begin{align}
AP+PA^T+BC_iP_{13}^T+P_{13}C_i^TB^T+BD_iD_i^TB^T&=0,\label{e24}\\
A^TQ+QA+C^TB_o^TQ_{14}^T+Q_{14}B_oC+C^TD_o^TD_oC&=0.\label{e25}
\end{align}
Let the residuals $R_1$ and $R_2$ be defined as
\begin{align}
R_1&=A\hat{P}+\hat{P}A^T+BC_iP_{13}^T+P_{13}C_i^TB^T+BD_iD_i^TB^T,\nonumber\\
R_2&=A^T\hat{Q}+\hat{Q}A+C^TB_o^TQ_{14}^T+Q_{14}B_oC+C^TD_o^TD_oC.\nonumber
\end{align}
As $\hat{P}$ and $\hat{Q}$ approach $P$ and $Q$, respectively, $R_1$ and $R_2$ approach zero. Further, when $R_1\approx0$ and $R_2\approx0$, the Petrov-Galerkin conditions $\tilde{W}^TR_1\tilde{W}\approx0$ and $\tilde{V}^TR_2\tilde{V}\approx0$ also hold approximately, which imply that $\tilde{W}^TP_{13}\approx P_{23}$ and $\tilde{V}^TQ_{14}\approx -Q_{24}$. The singular values of $P$ and $Q$ decay rapidly in the weighted case \citep{benner2016frequency,kurschner2018balanced}. Thus $||R_1||$ and $||R_2||$ are expected to decay quickly for a relatively smaller value of $r$ due to low numerical rank of $P$ and $Q$. Therefore, the conditions $\tilde{V}P_{23}\approx P_{13}$ and $-\tilde{W}Q_{24}\approx Q_{14}$ are expected to be met without having to increase the value of $r$ too much. In short, a compact ROM that nearly satisfies the optimality conditions (\ref{e5})-(\ref{e7}) can be obtained with the oblique projection $\Pi=-P_{12}Q_{12}^T$.
\section{Frequency-weighted $\mathcal{H}_2$-suboptimal MOR}\label{sec3}
In this section, a fixed point iteration algorithm is proposed, which on convergence tends to satisfy $\bar{X}=0$, $\bar{Y}=0$, and $\bar{Z}=0$, and therefore the resulting ROM tends to satisfy the optimality conditions (\ref{e5})-(\ref{e7}).
\subsection{Fixed-point Iteration Algorithm}
The fixed points of $(\tilde{A},\tilde{B},\tilde{C})=f_1\big(g_1(\tilde{A},\tilde{B},\tilde{C})\big)$ can be found by using the fixed point iteration algorithm with an additional constraint that $P_{12}$ and $-Q_{12}$ satisfy the oblique projection condition $-Q_{12}^TP_{12}=I$. To ensure that $\tilde{W}^T\tilde{V}=I$, most of the $\mathcal{H}_2$-optimal MOR algorithms use the correction equation $\tilde{W}=\tilde{W}(\tilde{V}^T\tilde{W})^{-1}$. Theoretically, it does ensure that $\tilde{W}^T\tilde{V}=I$, however, it becomes numerically unstable even for small systems \citep{bennersparse}. A more robust approach is to take the geometric interpretation of $\tilde{W}^T\tilde{V}=I$, i.e., the columns of $\tilde{W}$ and $\tilde{V}$ form biorthogonal basis of a subspace in $\mathbb{R}^n$ \citep{bennersparse}. Therefore, we use biorthogonal Gram-Schmidt method (steps \ref{st6}-\ref{st11} of Algorithm \ref{Alg1}) to ensure that $\tilde{W}^T\tilde{V}=I$ for better numerical properties. The pseudo code of our approach is given in Algorithm \ref{Alg1}, which is referred to as the frequency-weighted $\mathcal{H}_2$-suboptimal MOR algorithm (FWHMOR).
\begin{algorithm}[!b]
\textbf{Input:} Original system: $(A,B,C,D)$; Input weight: $(A_i,B_i,C_i,D_i)$; Output weight: $(A_o,B_o,C_o,D_o)$, Initial guess: $(\tilde{A},\tilde{B},\tilde{C})$.\\
\textbf{Output:} ROM $(\tilde{A},\tilde{B},\tilde{C})$.
  \begin{algorithmic}[1]
      \STATE Compute $P_i$ and $Q_o$ by solving\label{st1}
      \begin{align}
      A_iP_i+P_iA_i^T+B_iB_i^T&=0,\nonumber\\
      A_o^TQ_o+Q_oA_o+C_o^TC_o&=0.\nonumber
      \end{align}
       \STATE Compute $P_{13}$ and $Q_{14}$ from the equations (\ref{e22}) and (\ref{e23}), respectively.\label{st2}
      \STATE \textbf{while} (not converged) \textbf{do}
      \STATE Compute $P_{23}$ and $Q_{24}$ from the equations (\ref{e16}) and (\ref{e19}), respectively.\label{st4}
      \STATE Compute $P_{12}$ and $Q_{12}$ from the equations (\ref{e18}) and (\ref{e21}), respectively.\label{st5}
      \STATE \textbf{for} $i=1,\ldots,r$ \textbf{do}\label{st6}
      \STATE $v=P_{12}(:,i)$, $v=\prod_{k=1}^{i}\big(I+P_{12}(:,k)Q_{12}(:,k)^T\big)v$.
      \STATE $w=-Q_{12}(:,i)$, $w=\prod_{k=1}^{i}\big(I+Q_{12}(:,k)P_{12}(:,k)^T\big)w$.
      \STATE $v=\frac{v}{||v||_2}$, $w=\frac{w}{||w||_2}$, $v=\frac{v}{w^Tv}$.
      \STATE $\tilde{V}(:,i)=v$, $\tilde{W}(:,i)=w$.
      \STATE \textbf{end for}\label{st11}
      \STATE $\tilde{A}=\tilde{W}^TA\tilde{V}$, $\tilde{B}=\tilde{W}^TB$, $\tilde{C}=C\tilde{V}$.
      \STATE \textbf{end while}
  \end{algorithmic}
  \caption{FWHMOR}\label{Alg1}
\end{algorithm}
\begin{remark}
FWHMOR provides approximations of $P$ and $Q$ as $\hat{P}$ and $\hat{Q}$, respectively, which can be used in FWBT to save some computational cost by avoiding the computation of large-scale Lyapunov equations (\ref{e24}) and (\ref{e25}).
\end{remark}
\subsection{Connection with FWITIA\label{3.3.1}}
In this subsection, we show that FWHMOR satisfies the tangential interpolation conditions (\ref{e13}) and (\ref{e14}) under some conditions. To begin with, the connection between the reduction matrices of FWHMOR and FWITIA is established in Proposition \ref{prop4.1}.
\begin{proposition}\label{prop4.1}
Suppose FWITIA and FWHMOR have converged, and $(V_a,W_a)$ and $(\tilde{V},\tilde{W})$ are the respective reduction matrices of the last iteration. Also, suppose $H(s)$ and $\tilde{H}(s)$ have simple poles. Then the columns of $V_a$ and $W_a$ span the same subspaces as that spanned by the columns of $\tilde{V}$ and $\tilde{W}$, respectively.
\end{proposition}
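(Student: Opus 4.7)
The plan is to exploit the block-Sylvester structure that secretly lurks behind the equations (\ref{e16})–(\ref{e21}) for $(P_{12},P_{23})$ and $(Q_{12},Q_{24})$, and to show that, upon convergence, these block equations are the very linear systems whose solutions span the rational Krylov subspaces used in FWITIA.

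First I would stack (\ref{e16}) and (\ref{e18}). Transposing (\ref{e16}) gives $A_i P_{23}^T+P_{23}^T\tilde A^T+(P_iC_i^T+B_iD_i^T)\tilde B^T=0$, and (\ref{e18}) has the form $AP_{12}+P_{12}\tilde A^T+BC_iP_{23}^T+(P_{13}C_i^T+BD_iD_i^T)\tilde B^T=0$. Together, with $\Phi=\bigl[\begin{smallmatrix}P_{12}\\P_{23}^T\end{smallmatrix}\bigr]$ and the matrices $A_f,B_f$ from (\ref{e10})–(\ref{e11}), this collapses to the single Sylvester equation
\begin{align}
A_f\Phi+\Phi\tilde A^T+B_f\tilde B^T=0.\nonumber
\end{align}
Because $\tilde H(s)$ has simple poles, $\tilde A$ is diagonalizable as $\tilde A=R\Lambda L^T$ with $L^TR=I$ and $L,R$ invertible, and the residue representation implies $\tilde B^TL=[\tilde r_1,\ldots,\tilde r_r]$. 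Right-multiplying the block equation by $L$ and reading it column by column yields
\begin{align}
(\sigma_i I-A_f)^{-1}B_f\tilde r_i,\qquad \sigma_i=-\tilde\lambda_i,\nonumber
\end{align}
as the $i$-th column of $\Phi L$.

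At convergence of FWITIA the interpolation points and tangential directions are precisely $\sigma_i=-\tilde\lambda_i$ and $b_i=\tilde r_i$, so by construction $\operatorname{Ran}\bigl[\begin{smallmatrix}V_a\\V_b\end{smallmatrix}\bigr]=\operatorname{Ran}(\Phi L)=\operatorname{Ran}(\Phi)$. Since these bases (assumed of full column rank $r$) must be related by an invertible $r\times r$ change-of-basis matrix, reading off the top $n$ rows gives $\operatorname{Ran}(V_a)=\operatorname{Ran}(P_{12})$. Upon convergence, FWHMOR uses biorthogonalized $\tilde V$ built from the columns of $P_{12}$, so $\operatorname{Ran}(\tilde V)=\operatorname{Ran}(P_{12})=\operatorname{Ran}(V_a)$.

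A symmetric argument, stacking (\ref{e19}) and (\ref{e21}) and using $A_g,C_g$ from (\ref{e10})–(\ref{e12}), yields the dual block Sylvester equation $A_g^T\Psi+\Psi\tilde A-C_g^T\tilde C=0$ with $\Psi=\bigl[\begin{smallmatrix}Q_{12}\\Q_{24}^T\end{smallmatrix}\bigr]$; right-multiplying by $R$ and using $\tilde CR=[\tilde l_1,\ldots,\tilde l_r]$ recovers the columns $(\sigma_i I-A_g^T)^{-1}C_g^T\tilde l_i$, matching the Krylov subspace that defines $W_a$ at convergence ($c_i=\tilde l_i$). The same truncation-to-the-top-block argument then gives $\operatorname{Ran}(\tilde W)=\operatorname{Ran}(W_a)$.

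The main obstacle I foresee is purely bookkeeping: verifying that the block stacking really recovers $A_f$ and $A_g$ (including the sign and transpose of $P_{23}$ and $Q_{24}$) and that the invertibility of the change-of-basis matrices $L,R$ together with the assumed full column rank of the bases legitimately lets one pass from equality of ranges in dimension $n+n_i$ (or $n+n_o$) down to equality of ranges of the top blocks in dimension $n$. No deeper difficulty arises; once the block-Sylvester form is written the rest is a direct diagonalization calculation.
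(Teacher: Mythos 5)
Your proof is correct and rests on the same key idea as the paper's: the correspondence, via the eigendecomposition of $\tilde{A}$, between the coupled Sylvester equations (\ref{e16}), (\ref{e18}) (resp.\ (\ref{e19}), (\ref{e21})) and the rational Krylov vectors $(\sigma_i I-A_f)^{-1}B_f\tilde{r}_i$ and $(\sigma_i I-A_g^T)^{-1}C_g^T\tilde{l}_i$ used by FWITIA at convergence. The only cosmetic difference is direction: the paper transforms $(P_{12},P_{23},Q_{12},Q_{24})$ into the Sylvester equations known to be satisfied by $(V_a,V_b,W_a,W_b)$ and invokes uniqueness, whereas you stack the equations into block Sylvester form and diagonalize to read off the Krylov columns directly -- both yield $V_a=P_{12}R^{-*}$ and $W_a=-Q_{12}R$ up to basis change.
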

\begin{proof}
Let us denote the spectral factorization of $\tilde{A}$ as $\tilde{A}=RSR^{-1}$ where $S=diag(\tilde{\lambda}_1,\cdots,\tilde{\lambda_r})$. Now define $L_i$ and $L_o$ as $L_i=\tilde{B}^TR^{-*}=\begin{bmatrix}\tilde{r}_1&\cdots&\tilde{r}_r\end{bmatrix}$ and $L_o=\tilde{C}R=\begin{bmatrix}\tilde{l}_1&\cdots&\tilde{l}_r\end{bmatrix}$, respectively. Owing to the connection of Sylvester equations and rational Krylov subspaces \citep{panzer2014model,wolf2014h}, it is shown in \citep{zulfiqar2019frequency} that $V_a$ and $W_a$ in FWITIA satisfy the following Sylvester equations
\begin{align}
AV_a+V_aS^*+BC_iV_b^T+P_{13}C_i^TL_i+BD_iD_i^TL_i&=0,\nonumber\\
A^TW_a+W_aS+C^TB_i^TW_b^T+Q_{14}B_iL_o+C^TD_o^TD_oL_o&=0\nonumber
\end{align}
where
\begin{align}
SV_b+V_bA_i^T+L_i^*(C_iP_i+D_iB_i^T)&=0,\nonumber\\
S^*W_b+W_bA_o+L_o^T(B_o^TQ_o+D_o^TC_o)&=0.\nonumber
\end{align}
By putting $\tilde{A}=RSR^{-1}$ in (\ref{e16}) and (\ref{e18}), pre-multiplying (\ref{e16}) with $R^{-1}$, and post-multiplying (\ref{e18}) with $R^{-*}$, one can note that the following Sylvester equations hold
\begin{align}
SR^{-1}P_{23}+R^{-1}P_{23}A_i^T+L_i^*(C_iP_i+D_iB_i^T)&=0,\nonumber\\
AP_{12}R^{-*}+P_{12}R^{-*}S^*+BC_iP_{23}^TR^{-*}+P_{13}C_i^TL_i+BD_iD_i^TL_i&=0.\nonumber
\end{align}
Due to uniqueness, $V_a=P_{12}R^{-*}$ and $V_b=R^{-1}P_{23}$. Similarly, by putting $\tilde{A}=RSR^{-1}$ (\ref{e19}) and (\ref{e21}), pre-multiplying (\ref{e19}) with $R^{*}$, and post-multiplying (\ref{e21}) with $R^{-*}$, one can note that the following Sylvester equations hold
\begin{align}
S^*R^{*}Q_{24}+R^{*}Q_{24}A_o-L_o^T(B_o^TQ_o+D_o^TC_o)&=0,\nonumber\\
A^TQ_{12}R+Q_{12}RS-C^TB_i^TQ_{24}^TR-Q_{14}B_iL_o-C^TD_o^TD_oL_o&=0.\nonumber
\end{align}
Due to uniqueness, $W_a=-Q_{12}R$ and $W_b=-R^{*}Q_{24}$. Since $R$ only changes the basis of $V_a$ and $W_a$, the columns of $\tilde{V}$ and $\tilde{W}$ in FWHMOR span the same subspaces as spanned by $V_a$ and $W_a$, respectively, in FWITIA.
\end{proof}
From Proposition \ref{prop4.1}, it is clear that the ROM constructed by FWHMOR satisfies the tangential interpolation conditions (\ref{e13}) and (\ref{e14}) upon convergence like FWITIA, provided $P_{13}=\tilde{V}P_{23}$ and $Q_{14}=-\tilde{W}Q_{24}$. However, there are some notable numerical differences between FWHMOR and FWITIA. FWHMOR does not require $H(s)$ and $\tilde{H}(s)$ to have simple poles, unlike FWITIA, to construct the local optimum. Thus $\tilde{A}$ does not need to be diagonalizable in FWHMOR. Therefore, FWITIA can be considered equivalent to FWHMOR if $H(s)$ and $\tilde{H}(s)$ have simple poles. The spectral factorization of $\tilde{A}$ in every iteration of FWITIA may cause numerical ill-conditioning \citep{bennersparse}. Moreover, FWITIA uses the correction equation $\tilde{W}=\tilde{W}(\tilde{V}^T\tilde{W})^{-1}$ to ensure the oblique projection condition $\tilde{W}^T\tilde{V}=I$, whereas FWHMOR uses numerically more stable biorthogonal Gram-Schmidt \citep{bennersparse} to achieve that. In short, FWHMOR is numerically more general and stable algorithm than FWITIA, though both span the same subspaces. Moreover, the results of Section \ref{sec2} provide the theoretical foundation for FWITIA in terms of seeking to satisfy the optimality conditions (\ref{e5})-(\ref{e7}). Hence, FWITIA is no more a heuristic generalization of \citep{van2008h2} but an interpolation framework for the frequency-weighted $\mathcal{H}_2$-optimal MOR problem.
\subsection{Computational Aspects}
We now discuss some computational aspects of FWHMOR to be considered for its efficient numerical implementation.
\subsubsection{Initial Guess} The initial guess of the ROM can be made arbitrarily, for instance, by direct truncation of the original state-space realization. However, a good choice of the initial ROM generally has a positive impact on the performance of the fixed point iteration methods. Therefore, it is recommended to generate the initial guess by using the eigensolver proposed in \citep{rommes2006efficient}. Since the mirror images of the poles with large residues have a big contribution to the $\mathcal{H}_2$-norm, the initial guess can be generated with the eigensolver proposed in \citep{rommes2006efficient} by projecting $H(s)$ onto the dominant eigenspace of $A$. Another option is to compute the initial ROM by using the low-rank approximation methods in \citep{ahmad2010krylov,benner2014computing} such that it provides good Petrov-Galerkin approximations of $P_{13}$ and $Q_{14}$ as $\tilde{V}P_{23}$ and $-\tilde{W}Q_{24}$. This ensures that $||P_{13}-\tilde{V}P_{23}||$ and $||Q_{14}+\tilde{W}Q_{24}||$ are small to begin with.
\subsubsection{Convergence and Stopping Criteria}
Like in most of the $\mathcal{H}_2$-optimal MOR algorithms, the convergence is not guaranteed in FWHMOR. Therefore, a good stopping criterion is required to stop the algorithm in case it does not converge within admissible time. The stopping criterion should have two main properties: (i) It should be easily computable (ii) It should quickly indicate that the error has dropped appreciably. These two properties make sure that the computation of stopping criteria is not a computational burden in itself, and it can save computational effort by indicating that the algorithm is not improving the accuracy of ROM any further. Owing to connection between FWITIA and FWHMOR, the relative change in eigenvalues of $\tilde{A}$ can be used as the stopping criterion. Due to the small size of $\tilde{A}$, this can be achieved accurately and cheaply using $QZ$-method. $||\bar{X}||_2$ can also be used as a stopping criterion. The computation of $\tilde{P}$ and $\tilde{Q}$ in $||\bar{X}||_2$ requires solutions of two small-scale Lyapunov equations, i.e., (\ref{e17}) and (\ref{e20}), which can be done cheaply.  Also, note that from a pragmatic perspective, achieving less $||E_w(s)||_{\mathcal{H}_2}^2$ is the main objective and not the local optimum in itself. Thus the stopping criterion can be based directly on the error itself. However, the computation of $||E_w(s)||_{\mathcal{H}_2}^2$ in each iteration is an expensive operation in a large-scale setting. We have discussed in Section \ref{sec2} that $\bar{Y}=0$ and $\bar{Z}=0$ essentially minimize $||W(s)E(s)||_{\mathcal{H}_2}^2$ and $||E(s)V(s)||_{\mathcal{H}_2}^2$, respectively. Therefore, one can use the relative changes in $e_1=tr(2CP_{12}\tilde{C}^T-\tilde{C}\tilde{P}\tilde{C}^T)$ and $e_2=tr(-2B^TQ_{12}\tilde{B}-\tilde{B}^T\tilde{Q}\tilde{B})$ as the stopping criteria. The algorithm can be stopped when relative changes in $e_1$ and $e_2$ stagnate because this indicates that $||E(s)V(s)||_{\mathcal{H}_2}^2$ and $||W(s)E(s)||_{\mathcal{H}_2}^2$ are not changing. Another criterion for stopping the algorithm can be the number of iterations or computational time. If the other stopping criteria are not achieved within the maximum allowable number of iterations or admissible time, the algorithm can be stopped prematurely.
\subsubsection{Computational Cost}
The computational cost of FWHMOR depends on several factors. In step \ref{st1}, $P_i$ and $Q_o$ can be computed cheaply if $n_i$ and $n_o$ are small. However, if the weights $W_i(s)$ and $W_o(s)$ are large-scale transfer functions, $P_i$ and $Q_o$ should be replaced with their low-rank approximations, for instance, by using the toolboxes \citep{penzl1999lyapack,saak2010matrix}. In step \ref{st2}, $P_{13}$ and $Q_{14}$ can be computed within admissible time if $n_i$ and $n_o$ are small due to the \textit{sparse-dense} structure of the Sylvester equations (\ref{e22}) and (\ref{e23}) \citep{panzer2014model,wolf2014h}. The computational effort can further be reduced by using the efficient algorithm proposed in \citep{bennersparse} for this kind of Sylvester equations. The linear system of equations in \citep{bennersparse} can be solved by using sparse solvers like \citep{castagnotto2017sss,davis2004algorithm,demmel1999asynchronous,demmel1999supernodal} to further save the computational cost. If the weights $W_i(s)$ and $W_o(s)$ are large-scale transfer functions, then $P_{13}$ and $Q_{14}$ should also be replaced with their low-rank approximations like $P_i$ and $Q_o$. In steps \ref{st4} and \ref{st5}, $P_{23}$, $Q_{24}$, $P_{12}$, and $Q_{12}$ solve \textit{sparse-dense} Sylvester equations, which can again be solved by using the solver in \citep{bennersparse}.
\section{Numerical Results}\label{sec4}
In this section, FWHMOR is tested on four numerical examples. The first example is an illustrative one, which is presented to aid convenient repeatability and validation of all the theoretical results of the paper. The second and third examples are frequency-weighted MOR problems, and the fourth example is a controller reduction problem. The original high-order models in the last three examples are taken from the benchmark collection for MOR of \citep{chahlaoui2005benchmark}. Although the ROM constructed by FWBT is not optimal in any norm, it offers supreme accuracy and is considered a gold standard for the frequency-weighted MOR problem \citep{ghafoor2008survey}. Therefore, we compare the performance of our algorithm with FWBT. Further, we replace $P$ and $Q$ in FWBT with $\hat{P}$ and $\hat{Q}$, respectively, to perform approximate  FWBT, which we refer to as Approximate-FWBT (A-FWBT) wherein $\hat{P}$ and $\hat{Q}$ are generated by FWHMOR.
\subsection{Experimental Setup and Hardware:} In all examples, FWHMOR is initialized arbitrarily, and the mirror images of the poles of the initial guess used in FWHMOR are selected as interpolation points. For the multi-input multi-output (MIMO) example, the residues of the initial guess used in FWHMOR are selected as tangential directions in FWITIA. This ensures a fair comparison between FWITIA and FWHMOR, as both algorithms are expected to behave similarly with this selection. The relative change in the poles of the ROM is used as the stopping criterion with a tolerance of $1\times 10^{-2}$. The Lyapunov and Sylvester equations are solved using MATLAB's \textit{`lyap'} command. The experiments are performed using MATLAB $2016$ on a computer with a $2$GHz $i7$ processor, $16$GB random access memory, and Windows $10$ operating system.
\subsection{Illustrative Example} Consider a $6^{th}$ order system with the following state-space realization
\begin{align}
A&=\begin{bsmallmatrix}
         0    &     0    &     0   & 1    &     0    &     0\\
         0     &    0     &    0    &     0  &  1 &         0\\
         0     &    0     &    0    &     0   &      0  &  1\\
   -5.4545  &  4.5455    &     0 &  -0.0545  &  0.0455   &      0\\
   10 & -21 &   11 &   0.1 &  -0.21 &   0.11 \\
         0  &  5.5 &   -6.5 &        0  &  0.055 &   -0.065\end{bsmallmatrix},&B&=\begin{bsmallmatrix}0 & 0 &0 &0.0909 & 0.4 &   -0.5\end{bsmallmatrix}^T,\nonumber\\
         C&=\begin{bsmallmatrix}2  & -2  & 3  & 0  & 0  & 0\end{bsmallmatrix}.\nonumber
\end{align}
Let the input and output frequency weights be the following
\begin{align}
A_i&=\begin{bmatrix}-2  & -4.375\\ 8  &    0\end{bmatrix}, &B_i&=\begin{bmatrix}2  &   0\end{bmatrix}^T,& C_i&=\begin{bmatrix} 1&0\end{bmatrix},\nonumber\\
A_o&=\begin{bmatrix} -5 & -9.375\\ 16 & 0\end{bmatrix}, &B_o&=\begin{bmatrix}2  &   0\end{bmatrix}^T,& C_o&=\begin{bmatrix} 2.5&0\end{bmatrix}.\nonumber
\end{align}
The initial guess used in FWHMOR is the following
\begin{align}
\tilde{A}^{(0)}&=\begin{bmatrix}0.0332 &   5.4109\\-4.8283   -0.2998\end{bmatrix},&\tilde{B}^{(0)}&=\begin{bmatrix}-0.0747&-0.2958\end{bmatrix}^T,\nonumber\\
\tilde{C}^{(0)}&=\begin{bmatrix}1.0117& -0.2599\end{bmatrix}.\nonumber
\end{align}
Both FWHMOR and FWITIA converge in $4$ iterations. The reduction matrices in FWHMOR are the following
\begin{align}
\tilde{V}&=\begin{bmatrix}0.2132&-0.0046\\-0.9666&0.0623\\0.2671&-0.0341\\0.14&0.348\\-1.3601&-1.6698\\0.6732&0.5017\end{bmatrix}&\textnormal{ and } &&\tilde{W}&=\begin{bmatrix}0.4167&-0.337\\-0.7398&0.6556\\0.5269&-0.3808\\0.0139&0.2199\\-0.0325&-0.4411\\0.0137&0.2622\end{bmatrix},\nonumber
\end{align}
which construct the following ROM
\begin{align}
\tilde{A}&=\begin{bmatrix}0.4059& 1.6956\\-15.6668&-0.6719\end{bmatrix}, &\tilde{B}&=\begin{bmatrix}-0.0186&-0.2875\end{bmatrix}^T,\nonumber\\
\tilde{C}&=\begin{bmatrix}3.1608&-0.2362\end{bmatrix}.\nonumber
\end{align}
The reduction matrices in FWITIA are the following
\begin{align}
\tilde{V}&=\begin{bmatrix}0.0086&0.1932\\-0.063&-0.9053\\0.0272&0.2621\\-0.1994&-0.1223\\0.9381&-0.0251\\-0.2746&0.2427\end{bmatrix}&\textnormal{ and } &&\tilde{W}&=\begin{bmatrix}0.0084&0.4674\\-0.1161&-0.827\\-0.0698&0.5935\\-0.4109& 0.0283\\0.8306& -0.0621\\-0.4857& 0.0304\end{bmatrix},\nonumber
\end{align}
which construct the following ROM
\begin{align}
\tilde{A}&=\begin{bmatrix}1.4570&25.1669\\-1.1444&-1.7230\end{bmatrix}, &\tilde{B}&=\begin{bmatrix} 0.5377&-0.0374\end{bmatrix}^T,\nonumber\\
\tilde{C}&=\begin{bmatrix} 0.2248&2.9833\end{bmatrix}.\nonumber
\end{align}
One can verify by using MATLAB's command \textit{T = mldivide(V,V1)} that the reduction matrices and the ROMs generated by FWITIA and FWHMOR are related to each other with the similarity transformation $T=\begin{bmatrix}0.0275&0.8906\\-0.5842&-0.7104\end{bmatrix}$. This numerically confirms the results of Subsection \ref{3.3.1}. The deviations in the optimality conditions (\ref{e5})-(\ref{e7}) and the interpolation conditions (\ref{e13}) and (\ref{e14}) (which are denoted by $\mathscr{F}$ and $\mathscr{G}$, respectively) for both ROMs are tabulated in Table \ref{tab1}. It can be noted that the deviations are so small that these ROMs can be considered as local optima for all practical purposes. Moreover, FWITIA and FWHMOR also provide good approximations of $P$ and $Q$.
\begin{table}[!h]
\centering
\caption{Deviation in the optimality conditions}\label{tab1}
\begin{tabular}{|c|c|c|}
\hline
Deviation & FWITIA&FWHMOR\\ \hline
 $||\bar{X}+X||_2$         &$2.90\times 10^{-4}$ &     $1.88\times 10^{-4}$   \\ \hline
  $||\bar{Y}D_iD_i^T+Y||_2$        &   $1.19\times 10^{-4}$   &  $1.06\times 10^{-4}$\\ \hline
  $||D_o^TD_o\bar{Z}+Z||_2$        &   $2.26\times 10^{-5}$   & $1.46\times 10^{-5}$ \\ \hline
  $||\mathscr{F}||_2$        &   $6.96\times 10^{-4}$   & $6.96\times 10^{-4}$ \\ \hline
   $||\mathscr{G}||_2$        &   $2.13\times 10^{-5}$   & $2.13\times 10^{-5}$ \\ \hline
  $||P_{13}-\tilde{V}P_{23}||_2$        &  $0.0946$    & $0.0946$ \\ \hline
  $||Q_{14}+\tilde{W}Q_{24}||_2$        &   $0.1097$   &  $0.1096$\\ \hline
   $||P-\tilde{V}\tilde{P}\tilde{V}^T||_2$        &   $0.0419$   &  $0.0419$\\ \hline
   $||Q-\tilde{W}\tilde{Q}\tilde{W}^T||_2$        &   $0.2247$   &  $0.2247$\\ \hline
\end{tabular}
\end{table}
The $\mathcal{H}_2$- and $\mathcal{H}_\infty$-norms of the weighted error transfer function $E_w(s)$ are compared with FWBT in Table \ref{tab2}. It can be noted that FWHMOR and A-FWBT provide good approximation.
\begin{table}[!h]
\centering
\caption{Weighted Error}\label{tab2}
\begin{tabular}{|c|c|c|}
\hline
Technique &$||E_w(s)||_{\mathcal{H}_2}$&$||E_w(s)||_{\mathcal{H}_\infty}$\\ \hline
 FWBT         &$0.0080$ &     $0.0471$   \\ \hline
 FWITIA        &   $0.0061$   &  $0.0471$\\ \hline
 FWHMOR        &   $0.0061$   & $0.0471$ \\ \hline
 A-FWBT       &  $0.0061$    & $0.0471$ \\ \hline
\end{tabular}
\end{table}
\subsection{Clamped Beam} Consider the $348^{th}$ order clamped beam model from the benchmark collection of \citep{chahlaoui2005benchmark}. Suppose a ROM of the clamped beam model is required, which ensures high fidelity within the frequency interval $[5,25]$ rad/sec. To achieve good accuracy within the desired frequency interval, a $4^{th}$ order band-pass filter with the passband $[5,10]$ rad/sec is used as the input weight, which is designed by using MATLAB's command \textit{butter(2,[5,10],'s')}. Moreover, a $4^{th}$ order band-pass filter with the passband $[10,25]$ rad/sec is used as the output weight, which is designed by using MATLAB's command \textit{butter(2,[10,25],'s')}. A $5^{th}$ order ROM is obtained by using FWBT, FWITIA, FWHMOR, and A-FWBT. The $\mathcal{H}_2$- and $\mathcal{H}_\infty$-norms of the weighted error transfer function $E_w(s)$ are tabulated in Table \ref{tab3}. It can be seen that FWHMOR and A-FWBT construct accurate ROMs. The singular values of $E(s)$ within $[5,25]$ rad/sec are plotted in Figure \ref{fig1}. It can be seen that FWHMOR and A-FWBT ensure good accuracy within the desired frequency region.
\begin{table}[!h]
\centering
\caption{Weighted Error}\label{tab3}
\begin{tabular}{|c|c|c|}
\hline
Technique &$||E_w(s)||_{\mathcal{H}_2}$&$||E_w(s)||_{\mathcal{H}_\infty}$\\ \hline
 FWBT         &$0.3399$ &     $0.4418$   \\ \hline
 FWITIA        &   $0.2479$   &  $ 0.2417$\\ \hline
 FWHMOR        &   $0.2478$   & $0.2408$ \\ \hline
 A-FWBT       &  $0.2478$    & $0.2408$ \\ \hline
\end{tabular}
\end{table}
\begin{figure}[!h]
  \centering
  \includegraphics[width=6.5cm]{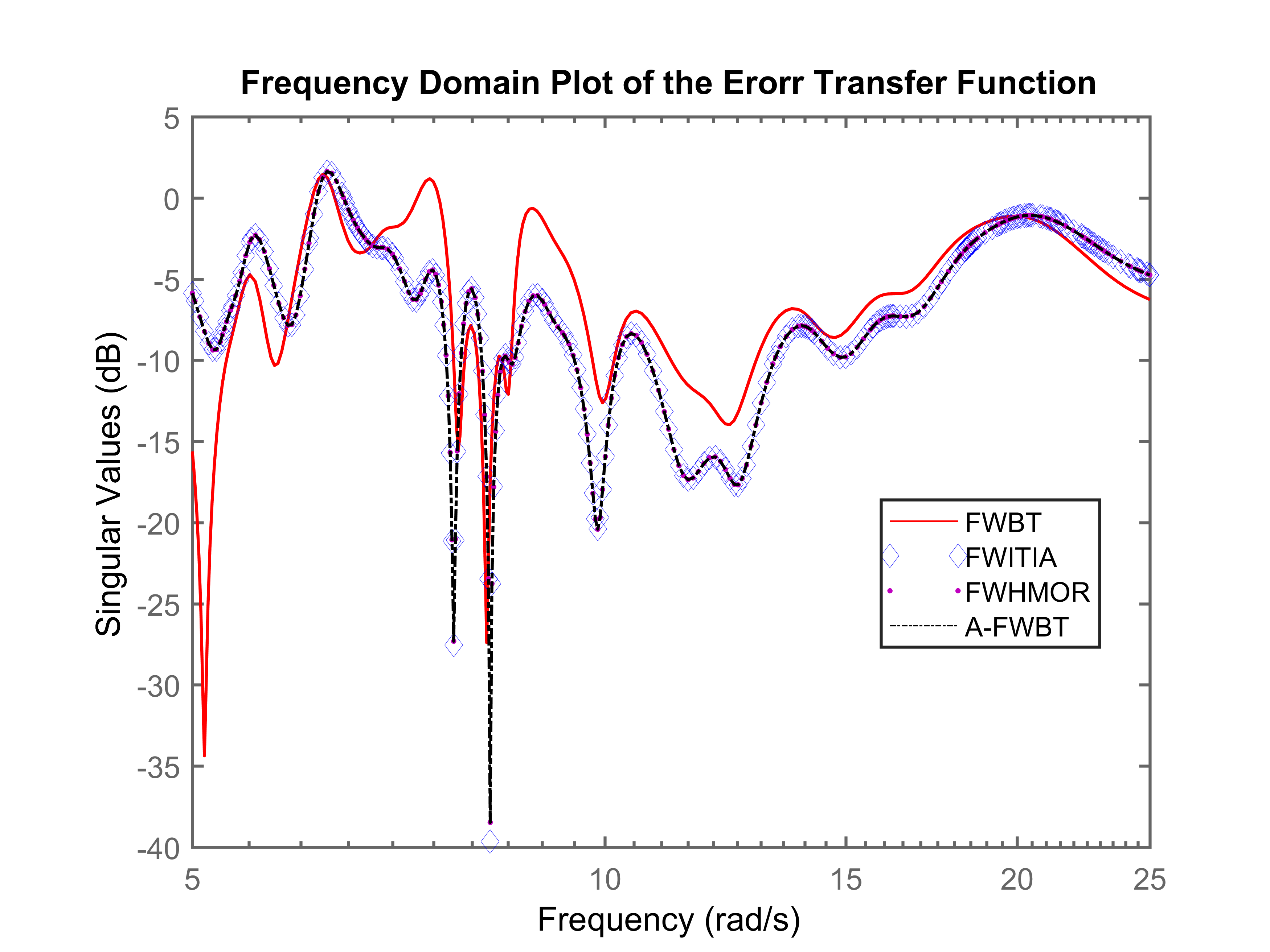}
  \caption{Singular values of $E(s)$ within $[5,25]$ rad/sec}\label{fig1}
\end{figure}
Further, ROMs of orders $6-15$ are obtained by using FWBT, FWITIA, FWHMOR, and A-FWBT. The weighted errors $||E_w(s)||_{\mathcal{H}_2}$ of the ROMs are compared in Figure \ref{fig2}, and it can be seen that FWHMOR and A-FWBT ensure high fidelity.
\begin{figure}[!h]
  \centering
  \includegraphics[width=6.5cm]{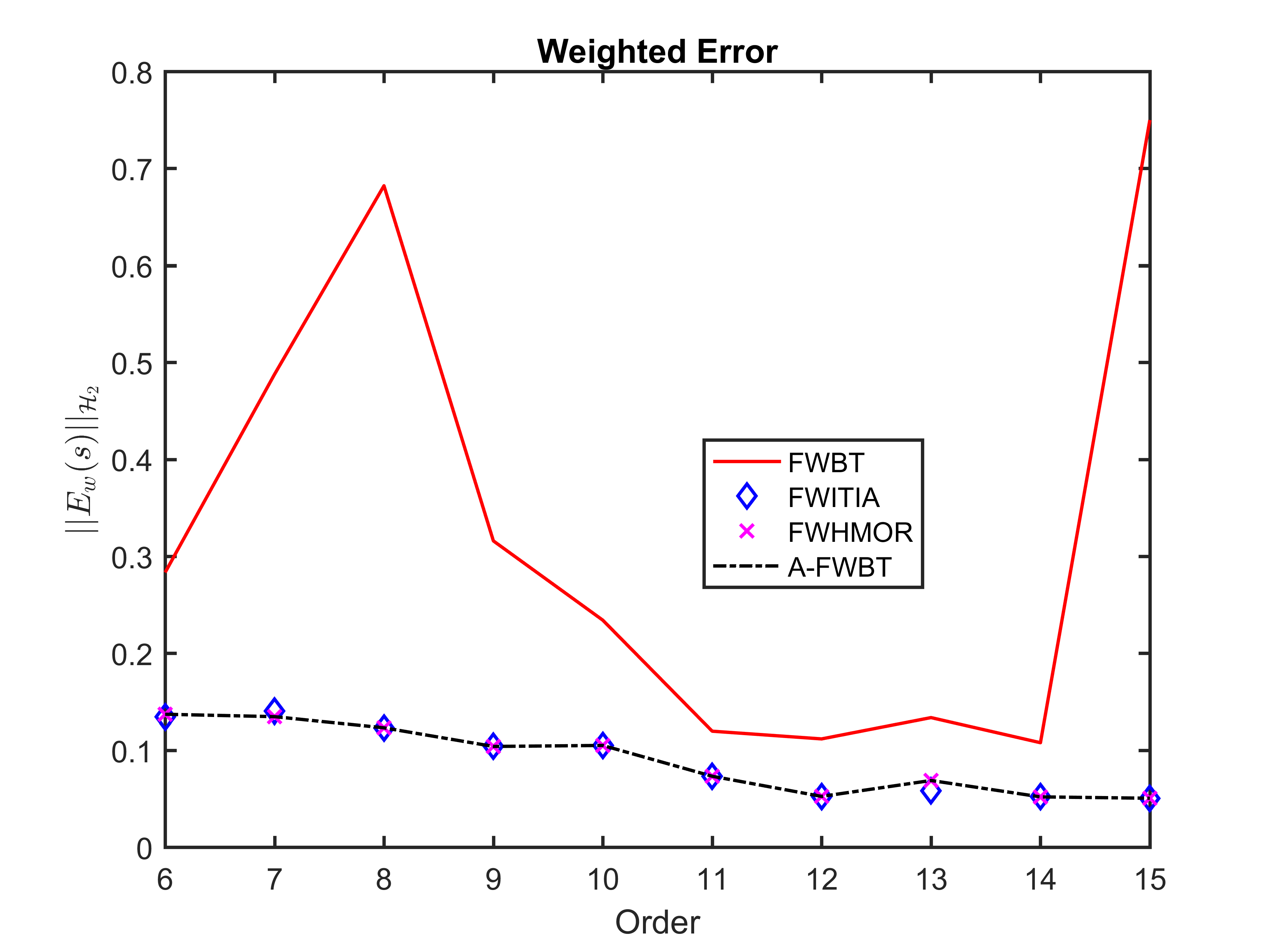}
  \caption{$\mathcal{H}_2$ norm of $E_w(s)$}\label{fig2}
\end{figure}
\subsection{Artificial Dynamic System} Consider $1006^{th}$ order artificial dynamic system model from the benchmark collection of \citep{chahlaoui2005benchmark}. Suppose a ROM of that artificial model is required, which ensures high fidelity within the frequency interval $[10,15]$ rad/sec. To achieve good accuracy within the desired frequency interval, a $4^{th}$ order band-pass filter with the passband $[10,15]$ rad/sec is used as the input and output weights, which is designed by using MATLAB's command \textit{butter(2,[10,15],'s')}. A $1^{st}$ order ROM is obtained by using FWBT, FWITIA, FWHMOR, and A-FWBT. The $\mathcal{H}_2$- and $\mathcal{H}_\infty$-norms of the weighted error transfer function $E_w(s)$ are tabulated in Table \ref{tab03}. It can be seen that FWHMOR and A-FWBT construct accurate ROMs. The singular values of $E(s)$ within $[10,15]$ rad/sec are plotted in Figure \ref{fig04}. It can be seen that FWHMOR and A-FWBT ensure good accuracy within the desired frequency region.
\begin{table}[!h]
\centering
\caption{Weighted Error}\label{tab03}
\begin{tabular}{|c|c|c|}
\hline
Technique &$||E_w(s)||_{\mathcal{H}_2}$&$||E_w(s)||_{\mathcal{H}_\infty}$\\ \hline
 FWBT         &$1.5736$ &     $1.4099$   \\ \hline
 FWITIA        &   $0.9335$   &  $ 0.8024$\\ \hline
 FWHMOR        &   $0.9334$   & $0.8028$ \\ \hline
 A-FWBT       &  $0.9334$    & $0.8028$ \\ \hline
\end{tabular}
\end{table}
\begin{figure}[!h]
  \centering
  \includegraphics[width=6.5cm]{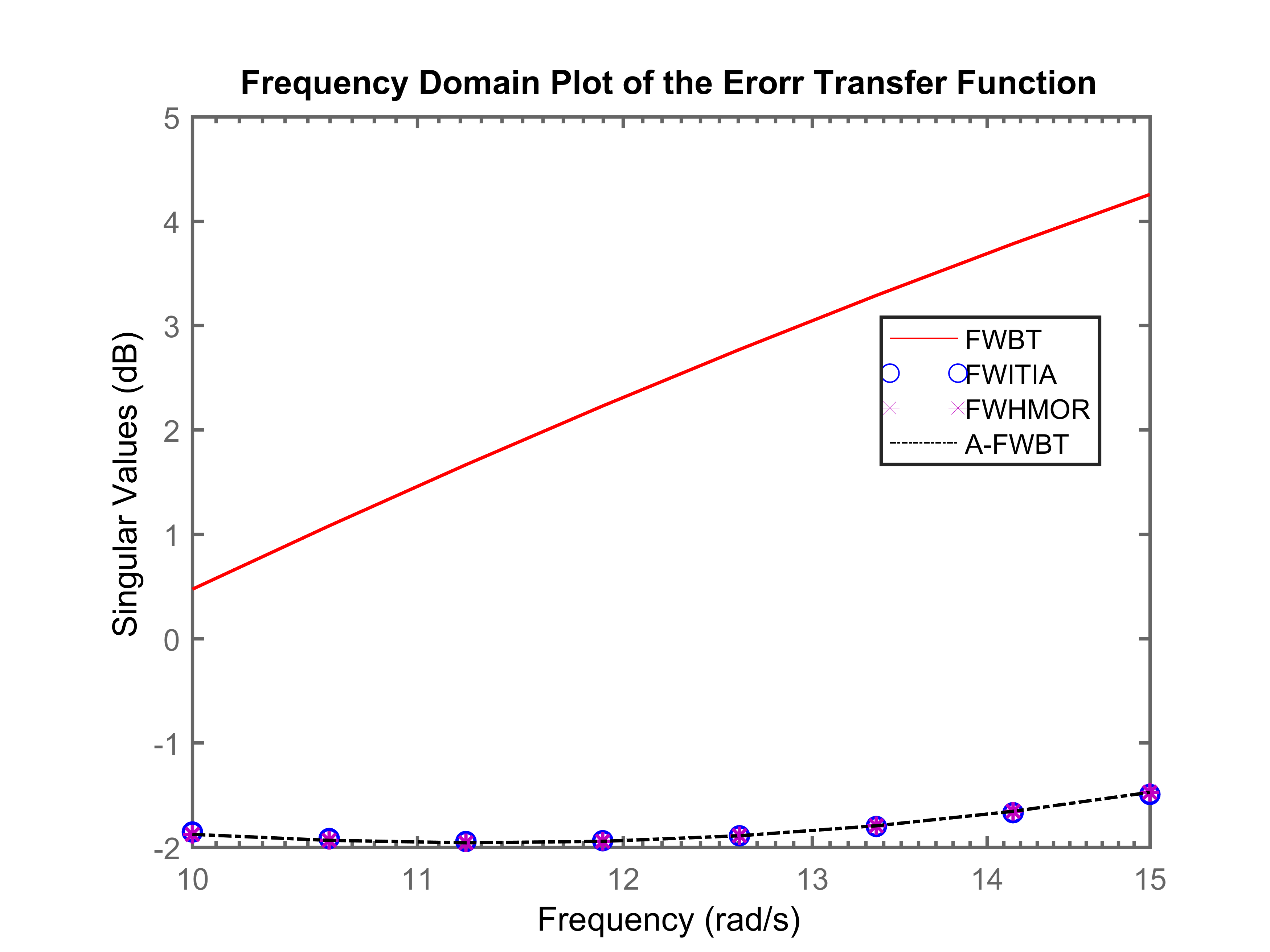}
  \caption{Singular values of $E(s)$ within $[10,15]$ rad/sec}\label{fig04}
\end{figure}
Further, ROMs of orders $5-15$ are obtained by using FWBT, FWITIA, FWHMOR, and A-FWBT. The weighted errors $||E_w(s)||_{\mathcal{H}_2}$ of the ROMs are compared in Figure \ref{fig3}, and it can be seen that FWHMOR and A-FWBT ensure high fidelity.
\begin{figure}[!h]
  \centering
  \includegraphics[width=6.5cm]{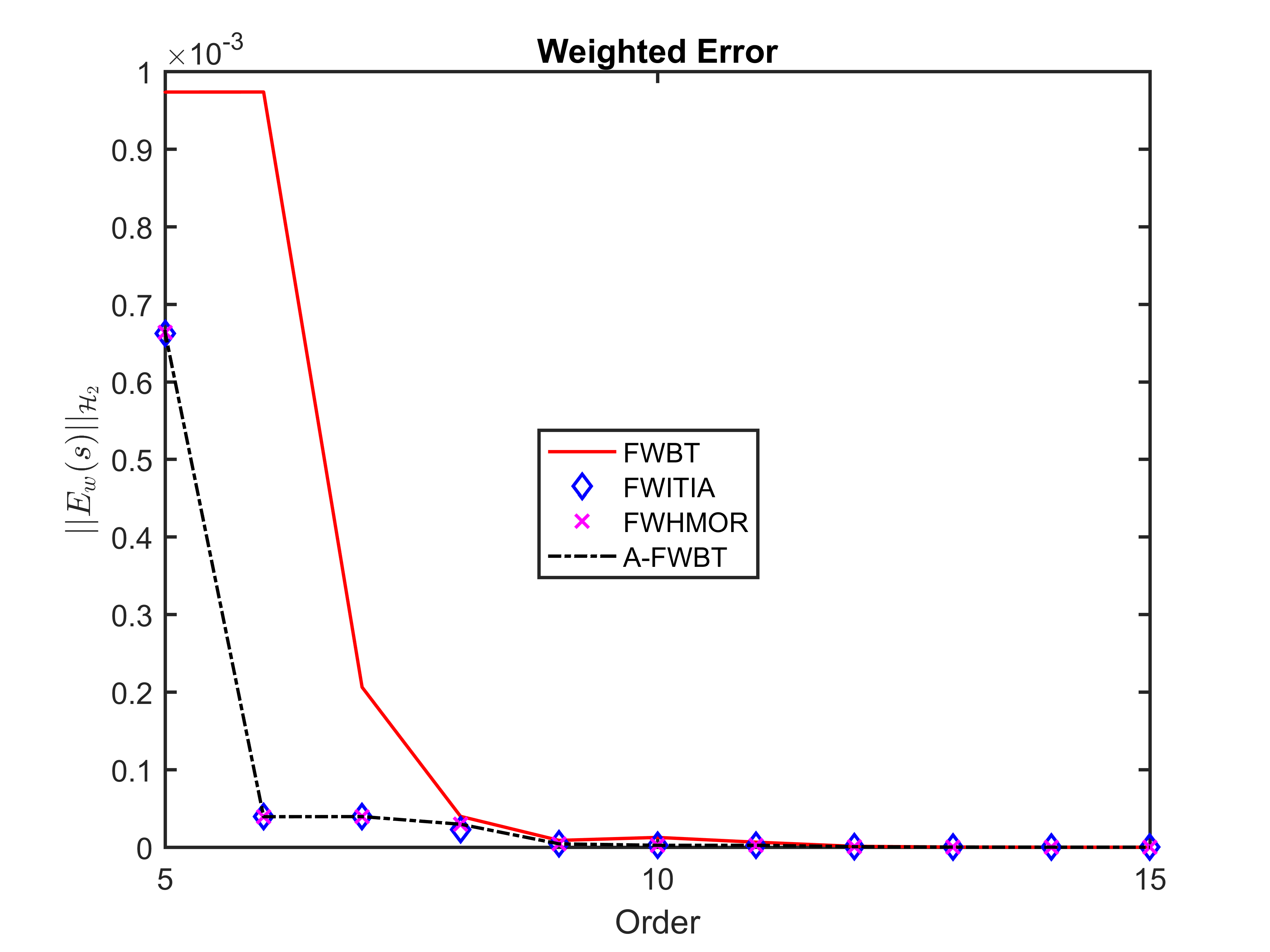}
  \caption{$\mathcal{H}_2$ norm of $E_w(s)$}\label{fig3}
\end{figure}
\subsection{International Space Station} Consider the $270^{th}$ order international space station model from the benchmark collection of \citep{chahlaoui2005benchmark} as the plant $P(s)$. An $\mathcal{H}_\infty$-controller $K(s)$ is designed using by MATLAB's \textit{ncfsyn} command wherein the loop shaping filter is specified as $\frac{20}{s+1.5}I_{3\times 3}$. The resulting controller is a $260^{th}$ order controller, which is reduced to $2^{nd}$ order controller $\tilde{K}(s)$ based on the closeness of the closed-loop transfer function criterion \citep{obinata2012model}. The frequency weights, which ensure that the closed-loop transfer function with the reduced controller is close to the original closed-loop transfer function, are given by, cf. \citep{obinata2012model},
\begin{align}
W_i(s)&=(I+P(s)K(s))^{-1}&\textnormal{ and }&&W_o(s)&=(I+P(s)K(s))^{-1}P(s).\nonumber
\end{align} The $\mathcal{H}_2$- and $\mathcal{H}_\infty$-norms of $E_w(s)=W_o(s)(K(s)-\tilde{K}(s))W_i(s)$ are tabulated in Table \ref{tab4}.
\begin{table}[!h]
\centering
\caption{Weighted Error}\label{tab4}
\begin{tabular}{|c|c|c|}
\hline
Technique &$||E_w(s)||_{\mathcal{H}_2}$&$||E_w(s)||_{\mathcal{H}_\infty}$\\ \hline
 FWBT         &$0.1361$ &     $4.7385$   \\ \hline
 FWITIA        &   $0.0066$   &  $0.0951$\\ \hline
 FWHMOR        &   $0.0066$   & $0.0950$ \\ \hline
 A-FWBT       &  $0.0066$    & $0.0950$ \\ \hline
\end{tabular}
\end{table} It can be noted that FWHMOR and A-FWBT show good accuracy. Further, ROMs of orders $3-15$ are obtained by using FWBT, FWITIA, FWHMOR, and A-FWBT. The weighted errors $||E_w(s)||_{\mathcal{H}_2}$ of the ROMs are compared in Figure \ref{fig4}, and it can be seen that FWHMOR and A-FWBT ensure high fidelity.
\begin{figure}[!h]
  \centering
  \includegraphics[width=6.5cm]{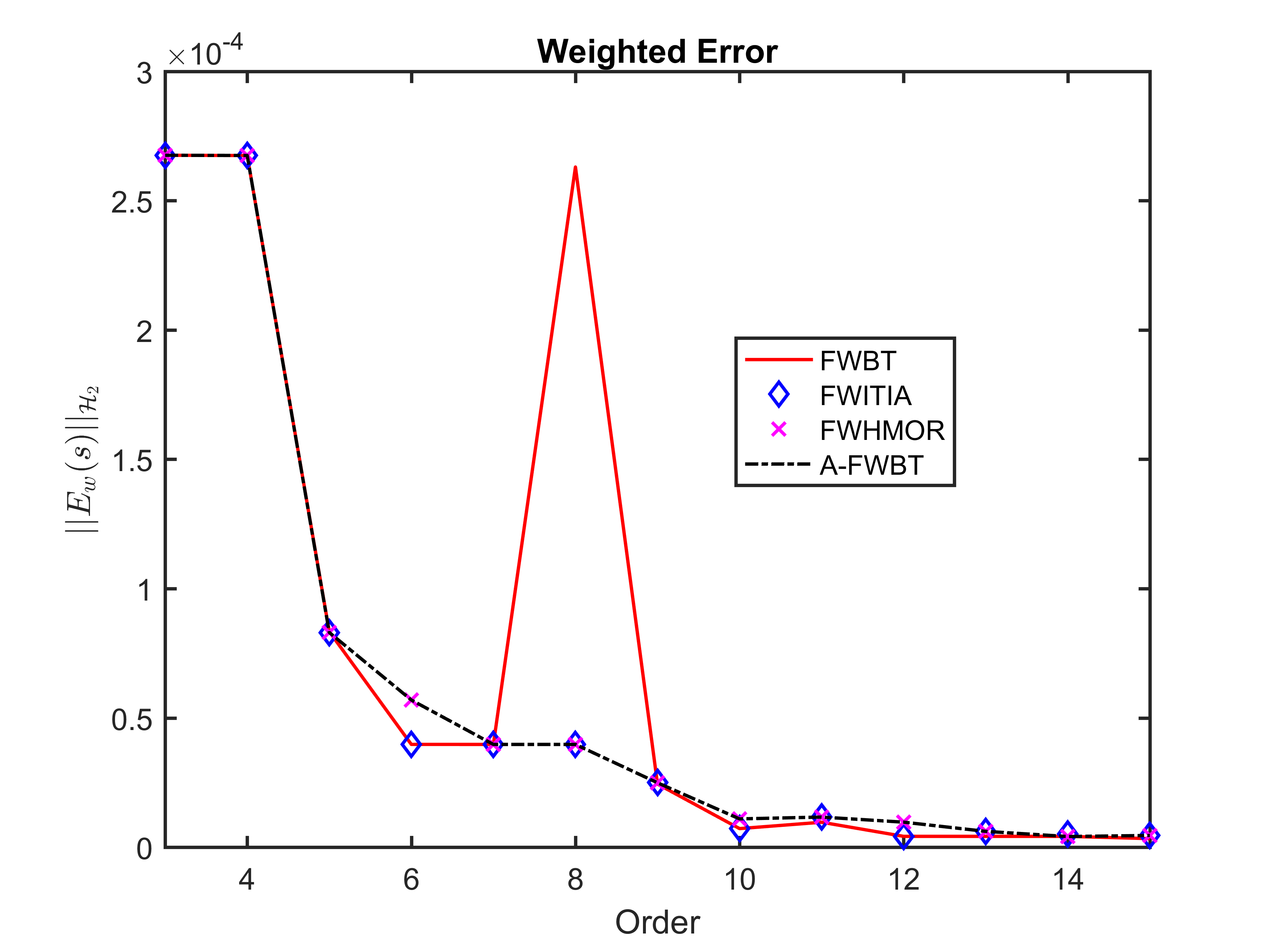}
  \caption{$\mathcal{H}_2$ norm of $E_w(s)$}\label{fig4}
\end{figure}
\section{Conclusion}\label{sec5}
We addressed the problem of frequency-weighted $\mathcal{H}_2$-optimal MOR within the projection framework. It is shown that although the first-order optimality conditions for the problem cannot be inherently met within the projection framework, the deviation in the optimality conditions decays as the order of the ROM increases. A fixed point iteration algorithm is proposed, which generates a nearly (local) optimal ROM. The oblique projection in the proposed algorithm is computed by solving \textit{sparse-dense} Sylvester equations for which several efficient algorithms exist. The numerical results validate the theory developed in the paper. In the future, a structure-preserving interpolation framework will be developed that preserves the structure of $F\big[H(s)\big]$ and $G\big[H(s)\big]$ in $F\big[\tilde{H}(s)\big]$ and $G\big[\tilde{H}(s)\big]$, respectively. This will enable satisfying the interpolation conditions (\ref{e13}) and (\ref{e14}) exactly, which is currently not possible in FWITIA.
\section*{Acknowledgment}
This work is supported in part by National Natural Science Foundation of China under Grant (No. $61873336$, $61873335$), in part by the National Key Research and Development Program (No. $2020$YFB $1708200$), in part by the Foreign Expert Program (No. $20$WZ$2501100$) granted by the Shanghai Science and Technology Commission of  Shanghai Municipality (Shanghai Administration of Foreign Experts Affairs), in part by $111$ Project (No. D$18003$) granted by the State Administration of Foreign Experts Affairs, and  in part by the Fundamental Research Funds for the Central Universities under Grant (No. FRF-BD-19-002A). M. I. Ahmad is supported by the Higher Education Commission of Pakistan under the National Research Program for Universities Project ID $10176$.

\end{document}